\newcommand{\rem}[1]{}
\newcommand{\de}{{\rm d}}
\newcommand{\m}{{\mathsf{m}}}
\newcommand{\bq}{{\mathbf{x}}}
\newcommand{\bp}{{\mathbf{p}}}
\newcommand{\bm}{{\mathbf{m}}}
\newcommand{\bM}{{\mathbf{M}}}
\newcommand{\bA}{{\mathbf{A}}}
\newcommand{\bN}{{\mathbf{N}}}
\newcommand{\bK}{{\mathbf{K}}}
\newcommand{\bE}{{\mathbf{E}}}
\newcommand{\bB}{{\mathbf{B}}}
\newcommand{\bD}{{\mathbf{D}}}
\newcommand{\bu}{{\boldsymbol{u}}}
\newcommand{\bU}{{\boldsymbol{U}}}
\newcommand{\dvol}{{\de^3\bq\,\de^3\bp}}
\newtheorem{theorem}{Theorem}
\newtheorem{corollary}[theorem]{Corollary}
\newtheorem{proposition}[theorem]{Proposition}
\newtheorem{remark}{Remark}
\newenvironment{proof}[1][Proof]{\noindent\textbf{#1.} }{\ \rule{0.5em}{0.5em}}
\begin{document}

\title{Hamiltonian approach to hybrid plasma models\vspace{-.1cm}}
\author{Cesare Tronci\\
\it\small Section de Math\'ematiques, \'Ecole Polytechnique
F\'ed\'erale de Lausanne, Switzerland \\\vspace{-.7cm}}
\date{}

\maketitle

\begin{abstract}
The Hamiltonian structures of several hybrid kinetic-fluid models
are identified explicitly, upon considering collisionless Vlasov
dynamics for the hot particles interacting with a bulk fluid. After
presenting different pressure-coupling schemes for an ordinary fluid
interacting with a hot gas, the paper extends the treatment to
account for a fluid plasma interacting with an energetic ion
species. Both current-coupling and pressure-coupling MHD schemes are
treated extensively. In particular, pressure-coupling schemes are
shown to require a transport-like term in the Vlasov kinetic
equation, in order for the Hamiltonian structure to be preserved.
The last part of the paper is devoted to studying the more general
case of an energetic ion species interacting with a neutralizing
electron background (hybrid Hall-MHD). Circulation laws and Casimir
functionals are presented explicitly in each case.\vspace{-.1cm}
\end{abstract}

\tableofcontents

\section{Introduction}

\subsection{Hybrid plasma models}
While fluid models are widely successful in plasma physics, kinetic
effects have been shown to be relevant in many situations involving
coexistence of cold plasmas and energetic hot particles. The latter
play an important role in different contexts, ranging from fusion
research \cite{ParkEtAl} to astrophysical plasmas \cite{WiYiOmKaQu}.
The need for multiscale models that accommodate the statistical
kinetic effects of hot particles is a subject of current research,
particularly involving computer simulations \cite{PaBeFuTaStSu}.
These models are usually realized by following a hybrid philosophy
that couples ordinary fluid models to appropriate kinetic equations
governing the phase-space distribution of the energetic particle species.

The two main research directions involve either the coexistence of a
fluid MHD plasma with an energetic ion component \cite{ParkEtAl} or
the coexistence of an electron neutralizing background with an
energetic ion component \cite{WiYiOmKaQu,YiWiEtAl} (hybrid
Hall-MHD). The first direction splits into two possible approaches:
the current-coupling scheme \cite{ParkEtAl,BeDeCh,ToSaWaWaHo} and
the pressure-coupling scheme \cite{ParkEtAl,FuPark,Cheng,TaBrKi},
depending on how the fluid equation is coupled to the kinetic
equation for the hot particles.

In particular, the pressure-coupling scheme possesses some variants
that will be discussed in this paper. In all the variants appearing
in the literature, the kinetic-fluid coupling occurs solely as an
extra pressure term in the fluid momentum equation, while the hot
particles are affected by the cold background plasma only through
the electromagnetic field terms appearing in the kinetic equation.
The fluid plasma density is conventionally transported by the
background fluid and this transport generates the usual (barotropic)
pressure effects in the evolution of the fluid momentum. On the
other hand, the kinetic pressure corresponding to the hot particles
has never been considered as arising from fluid transport terms
appearing in the accompanying kinetic equation. The absence of these
fluid transport terms destroys certain relevant properties of
ordinary fluid models, such as the celebrated Kelvin circulation
theorem that is of paramount importance, for example, in geophysical
fluid dynamics. Also, while the momentum equation usually conserves
the prescribed energy, the absence of transport terms in the kinetic
equation breaks the corresponding Hamiltonian structure, which is of
 interest in various plasma models \cite{MaWeRaScSp,Morrison2,Morrison2005,BrTr}. However, even
when considering hypothetical transport terms, their inclusion in
the kinetic equation cannot be achieved by ad hoc arguments and
their insertion must follow from general principles. An interesting
work in this direction was carried out by {Fl\aa} \cite{Fla}, who
considered a linearized approach. Fl\aa's results follow from the
geometric properties underlying kinetic Vlasov-type equations and in
particular their Lie-group symmetries \cite{MaWe1,Weinstein}.

The pressure-coupling scheme has been adopted in many simulation
codes, probably because of its simplicity (compared to the
current-coupling scheme). On the other hand, this scheme assumes
that the averaged kinetic momentum of the hot particles is
negligible at all times. This assumption is not required by the
current-coupling scheme, which then possesses a wider range of
validity. The current-coupling scheme couples the fluid and kinetic
equations through the Lorentz force that is exerted by the hot
particles on the background fluid plasma. Although this scheme has
been used in \cite{BeDeCh,ToSaWaWaHo}, it remains less popular in
the plasma physics community. Nevertheless, this paper stands in
favor of this model, because of the strong assumption on the hot
particle momentum that is required by the pressure-coupling scheme.

All existing hybrid codes usually replace the phase-space kinetic
equation for the energetic component by particle dynamics. The
distribution function is then reconstructed in order to calculate
the macroscopic variables, such as pressure and momentum. In
particular, the hot particles are often advanced by appropriate
gyrokinetic equations \cite{Littlejohn,GrKaLi,BrHa}. Nevertheless,
this paper will describe the energetic ions in terms of the more
general Vlasov kinetic equation, whose Hamiltonian formulation is
easier then its gyrokinetic counterpart. The particle picture can
always be recovered by the $\delta$-like particle solution, which is
well known to preserve the Hamiltonian structure \cite{HoTr}.

\subsection{Hamiltonian methods in plasma physics}
The use of Hamiltonian methods in plasma physics goes back to the
early 80's, when many well known plasma models were found to possess
a Hamiltonian structure. The latter is composed of a Poisson bracket
and a Hamiltonian function(al), which unambiguously identifies the
total energy of the system. For example, celebrated Hamiltonian
structures are those for multi-fluid plasmas \cite{SpKa,Spencer},
for magnetohydrodynamics \cite{MorrisonGreene,HolmKupershmidt} and
for the Maxwell-Vlasov equations
\cite{Morrison2bis,MaWe1,Weinstein}. Another important discovery was
the Hamiltonian formulation of guiding-center motion, which led to
the modern theory of gyrokinetic equations
\cite{GrKaLi,Littlejohn,BrHa}. The development of Hamiltonian
techniques is still an active area of research, as shown by the
recent works on magnetic reconnection and gyrofluid models (see
\cite{Morrison2} and references therein).

The Hamiltonian formulation is particularly advantageous when it is
accompanied by a variational principle on the Lagrangian side, which
is the case for most plasma models \cite{Morrison2}. When this
happens, this is due to the fact that the Poisson bracket arises
naturally from the relabeling symmetry of the plasma
\cite{MaWeRaScSp}. The advantage of such a Hamiltonian structure
resides in several aspects, whose most celebrated example is the
possibility of carrying an exact nonlinear stability analysis
\cite{HoMaRaWe}. This analysis is made possible by the existence of
invariant functions (Casimirs) that arise from the Poisson bracket.
On the Lagrangian side, Noether's theorem produces conserved
quantities, thereby establishing the existence of explicit
circulation laws that are a fundamental feature of fluid models.
Moreover, the variational principle associated to the Hamiltonian
structure can be approached to apply asymptotic methods or
Lagrangian averaging (or Lie-transform) techniques. Some recent
reviews on these topics, with special emphasis on plasma physics,
are available in \cite{BrTr,Morrison2,Morrison2005}.

In certain cases, the Hamiltonian structure does not possess an
immediate variational principle formulation in terms of purely
Lagrangian variables. For example, although the Maxwell-Vlasov
system \cite{Morrison2bis,MaWe1,Weinstein} possesses several
variational formulations \cite{Pfirsch,PfMo1,PfMo2,YeMo,Low}, these
are  based either on
 Eulerian variables or on a mixture of Eulerian and
Lagrangian variables. On the other hand, purely Lagrangian variables
were used in \cite{CeMaHo}, upon suitably modifying Low's Lagrangian
\cite{Low}. The method in \cite{CeMaHo} is general enough to be
applicable to any Vlasov-type system. This paper relies on this
approach to produce the Lagrangian formulation of the Hamiltonian
plasma models that will be presented.

\subsection{Goal of the paper}
This paper aims to present the explicit Hamiltonian structure of
several hybrid kinetic-fluid models, for either ordinary barotropic
fluids, MHD and Hall MHD. In particular, this paper derives a whole
class of hybrid kinetic-fluid models by making use of well
established Hamiltonian Poisson bracket methods. Upon taking the
direct sum of ordinary Poisson bracket structures, the hybrid models
are derived by simply transforming the bracket appropriately, so
that the Hamiltonian structure is always carried along in a natural
fashion. The current-coupling MHD scheme is found to possess an
intrinsic Poisson bracket, which is presented explicitly. On the
other hand, pressure-coupling schemes is derived by neglecting the
momentum contribution of the hot particles directly in the form of
the Hamiltonian, rather than approximating the equations of motion
in their final form. This key step of approximating the Hamiltonian
instead of the equations leads to the preservation of all the usual
properties of fluid models, such as circulation theorems and even
helicity preservation. All the considerations in this paper restrict
to consider non-collisional Vlasov dynamics and barotropic fluid
flows. The more general case of adiabatic fluid flows accounting for
 specific entropy transport is a straightforward extension. After
presenting the most basic (pressure-coupling) hybrid models for
ordinary neutral fluids, the paper formulates hybrid models for
electromagnetic fluids. Hybrid versions of magnetohydrodynamics are
presented, together with its associated circulation theorems. The
last part of the paper is devoted to presenting the Hamiltonian
structure of the hybrid Hall-MHD scheme \cite{WiYiOmKaQu,YiWiEtAl}.

\begin{remark}[The role of Lie symmetries]
The whole paper makes use of well established Lie-symmetry
techniques to produce the various Hamiltonian structures. Indeed,
the use of Lie-symmetry concepts (e.g. momentum maps and group
actions) is of central importance to the treatment, although the
discussion ignores all technical points. Consequently, all the
resulting Poisson brackets possess an intrinsic geometric nature,
which goes back to the invariance properties underlying MHD \cite{MaWeRa} and the
Maxwell-Vlasov system \cite{MaWe1,MaWeRaScSp}. Thus, the emergence
of Lie-Poisson systems is an immediate consequence of the treatment.
Restricting to consider these symmetric Lie-Poisson structures is a
deliberate choice of the author and the possibility of other forms of Poisson
brackets is not taken under consideration.
\end{remark}

\section{Elementary hybrid models}
This section presents the most basic ideas underlying hybrid models.
We consider two simple situations: the hybrid description of a
single fluid and the hybrid model for an ideal fluid interacting
with an ensemble of hot particles.

\subsection{Hybrid formulation of ordinary fluid dynamics\label{hybrid-onefluid}}
The most fundamental hybrid model already arises in the well known
derivation of fluid dynamics from kinetic theory \cite{Liboff}. Upon
restricting to the non-collisional case, one considers a Vlasov
equation of the form
\begin{equation}\label{Vlasov1}
\frac{\partial f}{\partial t}+\bp\cdot\frac{\partial f}{\partial
\bq}+\mathbf{F}\cdot\frac{\partial f}{\partial \bp}=0.
\end{equation}
where $\bp$ is the momentum coordinate (the single-particle mass is
taken to be unitary for simplicity) and $f(\bq,\bp,t)$ is the Vlasov
distribution on phase-space. The collective force field
$\mathbf{F}=-\nabla\phi(f)$  arises from a potential $\phi$, which is usually a linear functional of $f$.
Equation \eqref{Vlasov1} conserves the total energy
\begin{equation}\label{Vlasov-Ham1}
H(f)=\frac12\int \!f
\left({\left|\bp\right|^2}+\phi(f)\right)\de^3\bq\,\de^3\bp
\end{equation}
and  possesses the Poisson structure
\cite{Morrison1,Morrison2,MaWeRaScSp,Morrison2005,Morrison3,Morrison1bis}
\begin{equation}\label{Vlasov-PB1}
\{F,G\}=\int f\left\{\frac{\delta F}{\delta f},\frac{\delta G}{\delta f}\right\}\de^3\bq\,\de^3\bp
\end{equation}
where $\{\cdot,\cdot\}$ denotes canonical Poisson bracket on phase space. Thus, equation \eqref{Vlasov1} becomes
\begin{equation}\label{Vlasov2}
\frac{\partial f}{\partial t}+\left\{f,\frac{\delta H}{\delta f}\right\}=0\,.
\end{equation}
The equations of fluid dynamics follow by closing the equations of the moments \cite{GiHoTr}
\begin{equation}\label{moments}
A_s(\bq,t)=\int\bp^{\otimes s}\, f(\bq,\bp,t)\,\de^3\bp
\end{equation}
where $\bp^{\otimes s}$ is the $s$-th tensor power on the momentum
coordinate. The moment hierarchy is conventionally closed to
consider the first two moments $(A_0,A_1)$ as the basic dynamical variables,
while the second-order moment dynamics is usually assumed to depend
on $(A_0,A_1)$. Evidently, $n=A_0$ is the particle density, while
$\bK=A_1$ is the averaged kinetic momentum, that is the fluid momentum. We
notice that the Hamiltonian \eqref{Vlasov-Ham1} can be rewritten in
terms of the first two moments as
\begin{equation}\label{hybrid-Ham1}
\bar{H}(n,\bK,f)=\frac1{2}\int\!
f\,\left|\bp-\frac{\bK}{n}\right|^2\dvol+\frac1{2}\int\frac{\left|\bK\right|^2}{n}\
\de^3\bq+\Phi(n)\,,
\end{equation}
where $\Phi$ denotes a potential energy functional depending
exclusively on the particle density $n$.

\begin{remark}[Mean and fluctuation terms in the Vlasov Hamiltonian] Notice that the above form
\eqref{hybrid-Ham1} of the Vlasov Hamiltonian \eqref{Vlasov-Ham1} has the advantage of
splitting explicitly the mean and fluctuation parts of the kinetic
energy, corresponding to the second and first term respectively.
When the fluctuation part is absent, the system reduces to a `cold
plasma' system, which usually arises from the simple moment closure
$f=n\,\delta(\bp-\bK/n)$. On the other hand, in certain situations
involving small amounts of energetic particles, the energy
contribution of the averaged momentum $\bK$ is often neglected.
Then, the second term in \eqref{hybrid-Ham1} is small and all the
kinetic energy is concentrated in the fluctuation part, which is the
first term. As we shall see, this point is of central interest in
hybrid models for plasma physics.
\end{remark}

In order to express the Vlasov equation \eqref{Vlasov1} in terms of
its macroscopic fluid quantities, one inserts \eqref{chainrule} the
chain rule formula
\begin{equation}\label{chainrule}
\frac{\delta F}{\delta f}=\frac{\delta \bar{F}}{\delta
n}+\bp\cdot\frac{\delta \bar{F}}{\delta \bK}+\frac{\delta
\bar{F}}{\delta f}
\end{equation}
in the Poisson bracket \eqref{Vlasov-PB1}, thereby obtaining the
 Poisson structure
\begin{align}\nonumber
\{\bar{F},\bar{G}\}(n,{\bK},f)=& \int{\bK}\cdot \left[\frac{\delta
\bar{F}}{\delta{\bK}},\frac{\delta \bar{G}}{\delta
{\bK}}\right]\de^3\bq - \int n\left(\frac{\delta \bar{F}}{\delta
\bK}\cdot\nabla\frac{\delta \bar{G}}{\delta n}-\frac{\delta
\bar{G}}{\delta \bK}\cdot \nabla\frac{\delta \bar{F}}{\delta
n}\right)\de^3\bq
\\\nonumber
&+ \int f\left(\left\{\frac{\delta \bar{F}}{\delta f},\frac{\delta
\bar{G}}{\delta n}+\bp\cdot\frac{\delta \bar{G}}{\delta
\bK}\right\}-\left\{\frac{\delta \bar{G}}{\delta f},\frac{\delta
\bar{F}}{\delta n}+\bp\cdot\frac{\delta \bar{F}}{\delta
\bK}\right\}\right)\dvol
\\
&+\int f\left\{\frac{\delta \bar{F}}{\delta f},\frac{\delta
\bar{G}}{\delta f}\right\}\dvol \,. \label{PB-basic-hybrid}
\end{align}
Here,
${[\boldsymbol{X},\boldsymbol{Y}]}=-{(\boldsymbol{X}\cdot\nabla)
\boldsymbol{Y}}+{(\boldsymbol{Y}\cdot\nabla) \boldsymbol{X}}$ is
minus the commutator on vector fields. Upon dropping the bar
notation for convenience, Hamilton's equations are written as:
\begin{align}
\label{hybrid1-Hdependent-density}
\frac{\partial n}{\partial
t}&+\operatorname{div}\!\left(n\,\frac{\delta H}{\delta
\bK}\right)=-\int\left\{f,\frac{\delta H}{\delta f}\right\}\de^3\bp
\\
\label{hybrid1-Hdependent-momentum}
\frac{\partial \bK}{\partial
t}&+\operatorname{div}\!\left(\frac{\delta H}{\delta
\bK}\,\bK\right) + \left(\nabla\frac{\delta H}{\delta
\bK}\right)\cdot\bK=-n\nabla\frac{\delta H}{\delta n} -
\int\bp\left\{f,\frac{\delta H}{\delta f}\right\}\de^3\bp
\\
\frac{\partial f}{\partial t}&+\left\{f,\frac{\delta H}{\delta
n}+\bp\cdot\frac{\delta H}{\delta
\bK}\right\}=-\left\{f,\frac{\delta H}{\delta f}\right\}
\label{hybrid1-Hdependent-Vlasov}
\end{align}
These must be accompanied by the constraints
\[
\bK-\int\!\bp\,f\,\de\bp=0\,,\qquad\quad n-\int \!f\,\de\bp=0\,,
\]
which enforce the moments $(\bK,n)$ to depend on the distribution
function $f$ at all times. Indeed, although the hydrodynamical quantities
$(\bK,n)$ are mutually independent, they both depend on the statistical distribution $f$. (In more geometric terms, the above constraints
identify a zero-level set of a momentum map, known as
plasma-to-fluid map \cite{MaWeRaScSp}).

At this point, the Hamiltonian \eqref{hybrid-Ham1} yields
\[
\frac{\delta \bar{H}}{\delta
f}=\frac12\left|\bp-\frac{\bK}{n}\right|^2\,,\qquad \frac{\delta
\bar{H}}{\delta \bK}=\frac{\bK}{n}\,,\qquad \frac{\delta\bar{H}}{\delta
n}=-\frac12\left|\frac{\bK}{n}\right|^2+\frac{\delta \Phi}{\delta n}
\,,
\]
thereby producing the pressure term as follows:
\begin{proposition}\label{pressure-lemma} With the notation above, the following relation holds
\[
\frac12\int\!\bp\left\{f,\left|\bp-\boldsymbol{V}\right|^2\right\}\de^3\bp
=
\operatorname{div}\int\!\left(\bp-\boldsymbol{V}\right)^{\otimes\,2}
f\ \de^3\bp
\,,
\]
where we have defined the velocity $\boldsymbol{V}=\bK/n$.
\end{proposition}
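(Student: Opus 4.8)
The plan is to recast every term of the canonical bracket into phase-space divergence form and then integrate by parts in the momentum variable alone. First I would use the elementary identity, valid because the Hamiltonian vector field of any phase-space function $g$ is divergence-free on canonical phase space,
\[
\{f,g\}=\frac{\partial}{\partial\bq}\cdot\!\left(f\,\frac{\partial g}{\partial\bp}\right)-\frac{\partial}{\partial\bp}\cdot\!\left(f\,\frac{\partial g}{\partial\bq}\right),
\]
and apply it with $g=\tfrac12|\bp-\boldsymbol{V}(\bq)|^2$. Since $\boldsymbol{V}$ depends on $\bq$ only, we have $\partial g/\partial\bp=\bp-\boldsymbol{V}$ and $\partial g/\partial\bq=-(\nabla\boldsymbol{V})^{\mathsf T}(\bp-\boldsymbol{V})$, so that
\[
\tfrac12\{f,|\bp-\boldsymbol{V}|^2\}=\frac{\partial}{\partial\bq}\cdot\!\big(f\,(\bp-\boldsymbol{V})\big)+\frac{\partial}{\partial\bp}\cdot\!\big(f\,(\nabla\boldsymbol{V})^{\mathsf T}(\bp-\boldsymbol{V})\big).
\]

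Next I would multiply by $\bp$, integrate over momentum, and treat the two divergence terms separately. In the $\partial/\partial\bp$-term, integration by parts in $\bp$ (the boundary term vanishing by the usual decay of $f$ at large $|\bp|$) produces $-\int (\nabla\boldsymbol{V})^{\mathsf T}(\bp-\boldsymbol{V})\,f\,\de^3\bp$, which is a $\bq$-dependent linear combination of the components of $\int(\bp-\boldsymbol{V})f\,\de^3\bp=\bK-n\boldsymbol{V}$; this vanishes identically because $\boldsymbol{V}=\bK/n$, so the entire $\partial/\partial\bp$-contribution drops out. In the $\partial/\partial\bq$-term, since $\bp$ is an independent variable the $\bq$-derivative may be pulled outside the momentum integral, giving $\operatorname{div}\int\bp\otimes(\bp-\boldsymbol{V})\,f\,\de^3\bp$. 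Writing $\bp=(\bp-\boldsymbol{V})+\boldsymbol{V}$ in the left-hand factor and using once more that $\int(\bp-\boldsymbol{V})f\,\de^3\bp=0$ collapses this to $\operatorname{div}\int(\bp-\boldsymbol{V})^{\otimes2}f\,\de^3\bp$, which is exactly the asserted right-hand side.

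I do not expect a genuine obstacle: the computation is short once the divergence form of the bracket is in hand. The one point to watch is that \emph{both} simplifications — the cancellation of the momentum-derivative term and the symmetrization $\bp\otimes(\bp-\boldsymbol{V})\rightsquigarrow(\bp-\boldsymbol{V})^{\otimes2}$ — hinge on the single fact $\int(\bp-\boldsymbol{V})f\,\de^3\bp=0$, i.e. on the specific identification $\boldsymbol{V}=\bK/n$ and not on an arbitrary velocity field; one should also record the mild hypotheses on $f$ (sufficient decay in $\bp$ and smoothness in $\bq$) that legitimize the integration by parts and the interchange of $\partial/\partial\bq$ with the $\bp$-integral. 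A fully coordinate-wise version of the same argument — expanding the canonical bracket and integrating by parts termwise — leads to the identical cancellations but is appreciably more tedious.
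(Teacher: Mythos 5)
Your proof is correct. The paper's own argument is also a direct verification, but it is organized differently: it expands $\tfrac12|\bp-\boldsymbol{V}|^2=\tfrac12|\bp|^2-\bp\cdot\boldsymbol{V}+\tfrac12|\boldsymbol{V}|^2$ and computes the first moment of the bracket term by term, producing the absolute pressure tensor $\operatorname{div}\!\int\bp\bp\,f\,\de^3\bp$ together with the transport terms $-(\boldsymbol{V}\cdot\nabla)\bK-\nabla\boldsymbol{V}\cdot\bK-(\operatorname{div}\boldsymbol{V})\bK+\tfrac12 n\nabla|\boldsymbol{V}|^2$, which are then cancelled in two stages: first $\nabla\boldsymbol{V}\cdot\bK$ against $\tfrac12 n\nabla|\boldsymbol{V}|^2$, and then, after splitting $\int\bp\bp\,f\,\de^3\bp=n\boldsymbol{V}\boldsymbol{V}+\int(\bp-\boldsymbol{V})^{\otimes 2}f\,\de^3\bp$, the divergence of $n\boldsymbol{V}\boldsymbol{V}$ against the remaining transport terms. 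Your route through the conservative form of the canonical bracket reaches the same cancellations but packages them more transparently: the entire momentum-derivative contribution and the off-diagonal piece of the mixed tensor $\int\bp\otimes(\bp-\boldsymbol{V})f\,\de^3\bp$ both vanish by the single moment identity $\int(\bp-\boldsymbol{V})f\,\de^3\bp=0$, which makes explicit that the result hinges on the specific choice $\boldsymbol{V}=\bK/n$ and would fail for a generic velocity field. What the paper's version buys in exchange is that it needs nothing beyond the raw definition of the bracket, and it exhibits along the way the decomposition of the absolute pressure into mean and fluctuating parts, a splitting that recurs throughout the rest of the text. Your closing remarks on decay of $f$ in $\bp$ and on interchanging $\partial/\partial\bq$ with the momentum integral are appropriate and are left tacit in the paper.
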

\textbf{Proof.} The proof is given by direct verification as
follows:
\begin{align*}
\frac12\int\!\bp\left\{f,\left|\bp-\boldsymbol{V}\right|^2\right\}\de^3\bp=&
\int\!\bp\left\{f,\left(\frac12\,|\bp|^2-\bp\cdot\boldsymbol{V}+\frac12\,|\boldsymbol{V}|^2\right)\right\}\de^3\bp
\\
=&\operatorname{div}\!\int\!\bp\bp\,f\,\de^3\bp -
(\boldsymbol{V}\cdot\nabla)\bK-\nabla\boldsymbol{V}\cdot\bK-(\operatorname{div}\boldsymbol{V})\bK
+ \frac12\,n\nabla|\boldsymbol{V}|^2
\\
=& \operatorname{div}\!\int\!\bp\bp\,f\,\de^3\bp
-(\boldsymbol{V}\cdot\nabla)\bK-(\operatorname{div}\boldsymbol{V})\bK
\\
=&
\operatorname{div}\!\left(n\,\boldsymbol{V}\boldsymbol{V}+\int\!\!\left(\bp-\boldsymbol{V}\right)\left(\bp-\boldsymbol{V}\right)
f\, \de^3\bp\right)
-(\boldsymbol{V}\cdot\nabla)\bK-(\operatorname{div}\boldsymbol{V})\bK
\\
=&
\operatorname{div}\!\int\!\!\left(\bp-\boldsymbol{V}\right)\left(\bp-\boldsymbol{V}\right)
f\, \de^3\bp\,. \qquad \blacksquare
\end{align*}
Therefore, the hybrid equations of motion read as
\begin{align}
\frac{\partial n}{\partial
t}&+\operatorname{div}\!\left(n\,\boldsymbol{V}\right)=0
\\
\frac{\partial \boldsymbol{V}}{\partial
t}&+(\boldsymbol{V}\cdot\nabla)\boldsymbol{V}=-\nabla \phi
-\frac1{n}\operatorname{div}\!\int(\bp-\boldsymbol{V})^{\otimes\,2}\,f\,\de^3\bp
\\
\frac{\partial f}{\partial t}&=-\bp\cdot\frac{\partial
f}{\partial\bq}+\frac{\partial \phi}{\partial\bq}\cdot\frac{\partial
f}{\partial\bp} \,,
\end{align}
where we have defined the potential energy density
$\phi=\delta\Phi/\delta n$ and $\boldsymbol{V}=\bK/n$.

At this point, one usually proceeds by  invoking thermodynamic
principles leading to an equation of state. This simplifies the
pressure tensor on the right hand side of the velocity equation and
allows to discard the kinetic effects contained in the Vlasov
equation for $f$. This process leads to ordinary fluid equations for
inviscid barotropic fluids. However, when kinetic effects cannot be
neglected, the Vlasov kinetic equation must be retained in the above
system, which then constitutes the hybrid formulation of
ordinary fluid dynamics.

We notice that the terms on the left-hand side of the kinetic
equation in \eqref{hybrid1-Hdependent-Vlasov} play a key role in the
Hamiltonian structure of the system by coupling fluid and kinetic
terms in the Poisson bracket. Even more importantly, the term
$\bp\cdot\delta H/\delta \bK$ automatically generates the pressure
term in the velocity equation: this presence of the pressure tensor in the velocity equation
is not allowed when the term $\bp\cdot\delta H/\delta \bK$ is
absent in equation \eqref{hybrid1-Hdependent-Vlasov}. Moreover, the same term ensures the validity of the Kelvin
circulation theorem
\begin{align}
\frac{\de}{\de
t}\oint_{\gamma_t(\boldsymbol{V})}\boldsymbol{V}\cdot\de\bq=-\oint_{\gamma_t(\boldsymbol{V})}\left(\frac1{\rho}\,\operatorname{div}\!\int(\bp-\boldsymbol{V})^{\otimes\,2}\,f\,\de^3\bp\right)\cdot\de\bq
\,,
\end{align}
where $\gamma_t(\boldsymbol{V})$ is an arbitrary loop moving with
 velocity $\boldsymbol{V}$ and the right-hand side evidently
arises from the kinetic energy associated to the fluctuation
velocity $\bp-\boldsymbol{V}$. When this fluctuation velocity is
absent in the Hamiltonian \eqref{hybrid-Ham1} the right-hand side
above vanishes, thereby returning the usual Kelvin's theorem
$\oint_{\gamma_t}\boldsymbol{V}=\operatorname{const.}$

\subsection{Hamiltonian hybrid kinetic-fluid systems} This section considers the
interaction of an ideal compressible fluid interacting with an
ensemble of energetic particles. This basic example is of
fundamental importance because it embodies all the main properties
that will be used later to derive hybrid pressure-coupling plasma models.

In order to derive the hybrid system, we begin by considering the
following equations for the fluid density $\rho$ and velocity
$\boldsymbol{u}$ and for the Vlasov distribution $f$:
\begin{align}\label{Eul-Vlas-unt1bis}
\frac{\partial \rho}{\partial
t}&+\operatorname{div}(\rho\,\boldsymbol{u})=0
\\
\label{Eul-Vlas-unt1} \frac{\partial \boldsymbol{u}}{\partial
t}&+\boldsymbol{u}\cdot\nabla\boldsymbol{u}=-\frac1\rho\nabla{\sf
p}-\nabla\frac{\delta \Phi}{\delta \rho}
\\
\frac{\partial f}{\partial t}&+\bp\cdot\frac{\partial
f}{\partial\bq}-\nabla\frac{\delta \Phi}{\delta
f}\cdot\frac{\partial f}{\partial\bp}=0 \,. \label{Eul-Vlas-unt}
\end{align}
Here $\mathsf{p}$ denotes the scalar fluid pressure while the total
potential energy $\Phi(\rho,f)$ is a functional of both the fluid
mass density and the Vlasov distribution. For instance, the case of
electrostatic interactions involves a potential energy functional of
the form \cite{Morrison1,MaWeRaScSp}
\[
\Phi=-\frac12\iint\!\left(q_c\,\rho(\bq)+q_h\!
\int\!f(\bq,\bp)\,\de^3\bp\right)\frac1{\,\left|\bq-\bq'\right|^2}\left(q_c\,\rho(\bq')+q_h\!
\int\!f(\bq',\bp')\,\de^3\bp'\right)\de^3\bq\,\de^3\bq'\,,
\]
where the charge labels $c$ and $h$ denote the fluid (cold)
component and the energetic (hot) component respectively (again, we
consider unitary particle masses for simplicity). Then, the
potential energy represents the only coupling term of the above
equations, which otherwise are completely decoupled.

Since the Hamiltonian structures of ideal compressible fluids and
kinetic Vlasov equations are well known \cite{MaWeRaScSp}, it is
easy to see that the above system possesses a Hamiltonian
formulation with the direct sum Poisson bracket
\begin{equation}
\{F,G\}=\int\!{\bm}\cdot \left[\frac{\delta
F}{\delta{\bm}},\frac{\delta G}{\delta {\bm}}\right]\de^3\bq- \int\!
\rho\left(\frac{\delta{F}}{\delta \bm}\cdot\nabla\frac{\delta
{G}}{\delta \rho}-\frac{\delta{G}}{\delta \bm}\cdot
\nabla\frac{\delta{F}}{\delta \rho}\right)\de^3\bq +\int\!
f\left\{\frac{\delta F}{\delta f},\frac{\delta G}{\delta f}\right\}
\dvol \label{PB-Eul_Vlas-unt}
\end{equation}
and the Hamiltonian
\begin{equation}
H(\bm,\rho,f)=\frac1{2}\int\!\frac{\,\left|\bm\right|^2}{\rho}\
\de^3\bq+\frac1{2}\int\! f\left|\bp\right|^2\dvol
+\Phi(\rho,f)+\int\!\rho\ \mathcal{U}(\rho)\, \de^3\bq\,,
\end{equation}
which are both written in terms of the fluid momentum
$\bm=\rho\,\boldsymbol{u}$. Notice that we have introduced the fluid
internal energy $\mathcal{U}$ such that
$\mathsf{p}=\rho^2\,\de\mathcal{U}/\de\rho$.

At this point, in the search for a hybrid kinetic-fluid model, one
defines the total momentum
\[
\bM=\bm+\int\bp\, f\, \de^3\bp
\]
and expresses the equations of motion in terms of the new variable. This can be done by replacing the chain-rule formulas
\[
\frac{\delta F}{\delta\bm}=\frac{\delta \bar{F}}{\delta
\bM} \,,\qquad\frac{\delta F}{\delta\rho}=\frac{\delta \bar{F}}{\delta
\rho} \,,\qquad   \frac{\delta F}{\delta f}=\bp\cdot\frac{\delta \bar{F}}{\delta
\bM}+\frac{\delta \bar{F}}{\delta f}\,.
\]
in the Poisson bracket \eqref{PB-Eul_Vlas-unt}, thereby obtaining the structure
\begin{align}\nonumber
\{\bar{F},\bar{G}\}({\bM},\rho,&f)= \int{\bM}\cdot \left[\frac{\delta
\bar{F}}{\delta{\bM}},\frac{\delta \bar{G}}{\delta {\bM}}\right]\de^3\bq-
\int\! \rho\left(\frac{\delta \bar{F}}{\delta \bM}\cdot\nabla\frac{\delta
\bar{G}}{\delta \rho}-\frac{\delta \bar{G}}{\delta \bM}\cdot \nabla\frac{\delta
\bar{F}}{\delta \rho}\right)\de^3\bq
\\
&
+\int f\left\{\frac{\delta \bar{F}}{\delta f},\frac{\delta \bar{G}}{\delta
f}\right\}\dvol
+ \int f\left(\left\{\frac{\delta \bar{F}}{\delta f},\bp\cdot\frac{\delta \bar{G}}{\delta
\bM}\right\}-\left\{\frac{\delta \bar{G}}{\delta f},\bp\cdot\frac{\delta \bar{F}}{\delta
\bM}\right\}\right)\dvol
\,.
\label{PB-pressurescheme1}
\end{align}
Then, upon denoting $\bK=\int\!\bp\,f\,\de^3\bp$, the total energy expression
\begin{equation}\label{Hamiltonian-bar}
\bar{H}(\bM,\rho,f)=\frac1{2}\int\!\frac{\,\left|\bM-\bK\right|^2}{\rho}\
\de^3\bq+\frac1{2}\int\! f\left|\bp\right|^2\dvol
+\Phi(\rho,f)+\int\!\rho\ \mathcal{U}(\rho)\, \de^3\bq\,,
\end{equation}
yields the equations of motion
\begin{align}
\frac{\partial \rho}{\partial
t}&+\operatorname{div}(\rho\,\boldsymbol{u})=0
\\
\frac{\partial \boldsymbol{U}}{\partial
t}&+\boldsymbol{u}\cdot\nabla\boldsymbol{u}=-\frac1\rho\operatorname{div}\!\int\!\bp\bp\,f\,\de^3\bp-\frac1\rho\nabla{\sf
p}-\nabla\frac{\delta \Phi}{\delta \rho}
\\
\frac{\partial f}{\partial t}&+\bp\cdot\frac{\partial
f}{\partial\bq}-\frac{\partial
}{\partial\bq}\frac{\delta \Phi}{\delta
f}\cdot\frac{\partial f}{\partial\bp}=0 \,.
\end{align}
where $\boldsymbol{U}=\bM/\rho$ and
$\boldsymbol{u}=\bm/\rho=(\bM-\bK)/\rho$. These last equations of
motion are totally equivalent to the system
\eqref{Eul-Vlas-unt1bis}-\eqref{Eul-Vlas-unt}, that is they yet
involve no approximations.
\subsubsection{First pressure-coupling scheme}
As we remarked in the preceding section, one normally considers
small amounts of energetic particles, so that the total energy
contribution of the averaged momentum is often neglected. In the
plasma physics literature, this assumption is made by simply
replacing the convective term
$\boldsymbol{u}\cdot\nabla\boldsymbol{u}$ by
$\boldsymbol{U}\cdot\nabla\boldsymbol{U}$ in the velocity equation.
On the other hand, this simple step inexorably destroys the
Hamiltonian structure of the equations of motion, which then lack an
energy balance equation and therefore need substantial
modifications. A simple possibility of preserving the Hamiltonian
structure is provided by the assumption
$\bM=\bm+n\,\boldsymbol{V}\simeq\bm$ in the Hamiltonian
\eqref{Hamiltonian-bar}. Then, the total energy
\begin{equation}
\bar{H}(\bM,\rho,f)=\frac1{2}\int\!\frac{\,\left|\bM\right|^2}{\rho}\
\de^3\bq+\frac1{2}\int\! f\left|\bp\right|^2\dvol
+\Phi(\rho,f)+\int\!\rho\ \mathcal{U}(\rho)\, \de^3\bq\,,
\end{equation}
yields the equations of motion
\begin{align}
\frac{\partial \rho}{\partial
t}&+\operatorname{div}(\rho\,\boldsymbol{U})=0
\\
\frac{\partial \boldsymbol{U}}{\partial
t}&+\boldsymbol{U}\cdot\nabla\boldsymbol{U}=-\frac1\rho\operatorname{div}\!\int\!\bp\bp\,f\,\de^3\bp-\frac1\rho\nabla{\sf
p}-\nabla\frac{\delta \Phi}{\delta \rho}
\\
\frac{\partial f}{\partial t}&+\left\{f,{\bp}\cdot\boldsymbol{U}\right\}+\bp\cdot\frac{\partial
f}{\partial\bq}-\frac{\partial
}{\partial\bq}\frac{\delta \Phi}{\delta
f}\cdot\frac{\partial f}{\partial\bp}=0 \,.
\end{align}
While the velocity equation is exactly the same as that of some
hybrid schemes frequently used in plasma physics
\cite{FuPark,ParkEtAl,Cheng} (up to Lorentz force terms that will be
included in the following sections), the Vlasov equation carries the
extra term
\[
\left\{f,{\bp}\cdot\boldsymbol{U}\right\}=\boldsymbol{U}\cdot\frac{\partial
f}{\partial\bq}-\frac{\partial f}{\partial\bp}\cdot\frac{\partial\boldsymbol{U}
}{\partial\bq}\cdot\bp
\]
which evidently arises from the velocity shift
$\boldsymbol{U}=\boldsymbol{u}+\bK/\rho$.

\begin{remark}[Transport of the Vlasov distribution]
We emphasize that the term $\left\{f,\bp\cdot\boldsymbol{U}\right\}$
is required by the Hamiltonian structure of the original equations
and cannot arise from direct approximations on the equations of
motion \eqref{Eul-Vlas-unt1bis}-\eqref{Eul-Vlas-unt}. The same term
was found by Fl{\aa} \cite{Fla}, who used more sophisticated
techniques. We notice that the term $\boldsymbol{U}\cdot\nabla f$
leads to the interpretation of a Vlasov equation that is
`transported' along the macroscopic velocity $\boldsymbol{U}$. This
transport property generates the circulation force term
$\partial_\bp f\cdot\nabla\boldsymbol{U}\cdot\bp$ that is related to
the change of reference $\boldsymbol{u}\to\boldsymbol{U}$. The
kinetic-fluid interaction term
$\left\{f,\bp\cdot\boldsymbol{U}\right\}$ is the main novelty of
this model.
\end{remark}

 The Vlasov equation may be rewritten as $\partial_tf+\big\{f,1/2\big(\left|\bp+\boldsymbol{U}\right|^2-\left|\boldsymbol{U}\right|^2\big)+\delta_{\!f}\Phi\big\}=0$. Notice that the equation ${\partial_t n +\operatorname{div}(n\,\boldsymbol{U})}=-\operatorname{div}\bK$ for the energetic particle density $n=\int\!f\,\de^3\bp$ involves the continuous source term $\operatorname{div}\bK$. Thus, a slowly varying density $n$ enforces the averaged momentum
$\bK$ to possess negligible divergence.

\subsubsection{Second pressure-coupling scheme}
So far, we only made the simple assumption $\bM\simeq\bm$ in the expression of the total energy. This was justified by the fact that the energetic particles exist only in small amounts, so that the averaged momentum $\bK$ is assumed not to contribute to the total energy. However, as we have remarked in section \ref{hybrid-onefluid}, under this assumption the kinetic energy of the energetic component may be modified to neglect its average value $|\bK|^2/(2n)$. Then, the total energy becomes
\begin{equation}\label{Hamiltonian-pressurecoupling2}
\bar{H}(\bM,\rho,f)=\frac1{2}\int\!\frac{\,\left|\bM\right|^2}{\rho}\
\de^3\bq+\frac1{2}\int\! f\left|\bp-\frac{\bK}{n}\right|^2\dvol
+\Phi(\rho,f)+\int\!\rho\ \mathcal{U}(\rho)\, \de^3\bq\,,
\end{equation}
which in turn produces the equations
\begin{align}
\frac{\partial \rho}{\partial
t}&+\operatorname{div}(\rho\,\boldsymbol{U})=0
\\
\frac{\partial \boldsymbol{U}}{\partial
t}&+\boldsymbol{U}\cdot\nabla\boldsymbol{U}=-\frac1\rho\operatorname{div}\!\int\!\left(\bp-\boldsymbol{V}\right)\left(\bp-\boldsymbol{V}\right)f\,\de^3\bp-\frac1\rho\nabla{\sf
p}-\nabla\frac{\delta \Phi}{\delta \rho}
\\
\frac{\partial f}{\partial t}&+\left\{f,{\bp}\cdot\big(\boldsymbol{U}-\boldsymbol{V}\big)+\frac12|\boldsymbol{V}|^2\right\}+\bp\cdot\frac{\partial
f}{\partial\bq}-\frac{\partial
}{\partial\bq}\frac{\delta \Phi}{\delta
f}\cdot\frac{\partial f}{\partial\bp}=0 \,.
\end{align}
The same form of the above velocity equation also appears in certain
hybrid schemes \cite{TaBrKi}, although the above Vlasov equation
differs substantially from those models because the present system
accounts for energy balance, contrarily to the pressure-coupling
schemes appearing in the physics literature. We notice that the
above Vlasov equation can be written in the form $\partial_t
f+\big\{f,1/2\big(\left|\bp-\boldsymbol{V}\right|^2+\left|\bp+\boldsymbol{U}\right|^2-\left|
\bp\right|^2-\left|\boldsymbol{U}\right|^2\big)+\delta_{\!f}\Phi\big\}=0$,
which emphasizes the various kinetic energy contributions. A natural
consequence of this kinetic equation is that its zero-th moment $n$
satisfies the simple advection relation
$\partial_tn+\operatorname{div}(n\,\boldsymbol{U})=0$, so that the
total density $D=\rho+n$ satisfies the equation $\partial_t
D+\operatorname{div}(D\,\boldsymbol{U})=0$, as it appears in \cite{KimEtAl}.

\subsubsection{Third pressure-coupling scheme\label{thirdPressureScheme}}
The advection equation for the total density $D$ arising in the last case, suggests looking at the equations of motion in terms of this density variable. To this purpose, one again replaces the chain-rule formulas
\[
\frac{\delta F}{\delta\rho}=\frac{\delta \bar{F}}{\delta
D}\,,\qquad\frac{\delta F}{\delta\bm}=\frac{\delta \bar{F}}{\delta
\bM} \,,\qquad  \frac{\delta F}{\delta f}=\frac{\delta
\bar{F}}{\delta D}+\bp\cdot\frac{\delta \bar{F}}{\delta
\bM}+\frac{\delta \bar{F}}{\delta f}\,.
\]
in the Poisson bracket \eqref{PB-Eul_Vlas-unt}, thereby obtaining the
following Poisson bracket
\begin{align}\nonumber
\{\bar{F},\bar{G}\}({\bM},D,f)=& \int{\bM}\cdot \left[\frac{\delta
\bar{F}}{\delta{\bM}},\frac{\delta \bar{G}}{\delta {\bM}}\right]\de^3\bq-
\int\! D\left(\frac{\delta \bar{F}}{\delta \bM}\cdot\nabla\frac{\delta
\bar{G}}{\delta D}-\frac{\delta \bar{G}}{\delta \bM}\cdot \nabla\frac{\delta
\bar{F}}{\delta D}\right)\de^3\bq
\\\nonumber
&+ \int f\left(\left\{\frac{\delta \bar{F}}{\delta f},\frac{\delta
\bar{G}}{\delta D}+\bp\cdot\frac{\delta \bar{G}}{\delta
\bM}\right\}-\left\{\frac{\delta \bar{G}}{\delta f},\frac{\delta
\bar{F}}{\delta D}+\bp\cdot\frac{\delta \bar{F}}{\delta
\bM}\right\}\right)\dvol
\\\nonumber
&+\int f\left\{\frac{\delta \bar{F}}{\delta f},\frac{\delta \bar{G}}{\delta
f}\right\}\dvol
 \,, \label{PB-Eul_Vlas-ent}
\end{align}
which is identical to \eqref{PB-basic-hybrid}. Under the assumption of a rarefied energetic gas, one has $D\simeq n$ and each of the Hamiltonians used above yields a slightly different system. For example, upon replacing $\rho$ by $D$, the energy in \eqref{Hamiltonian-pressurecoupling2} produces the equations
\begin{align}
\frac{\partial D}{\partial
t}&+\operatorname{div}(D\,\boldsymbol{U})=0
\\
\frac{\partial \boldsymbol{U}}{\partial
t}&+\boldsymbol{U}\cdot\nabla\boldsymbol{U}=-\frac1D\operatorname{div}\!\int\!\left(\bp-\boldsymbol{V}\right)\left(\bp-\boldsymbol{V}\right)f\,\de^3\bp-\frac1\rho\nabla{\sf
p}-\nabla\frac{\delta \Phi}{\delta D}
\\
\frac{\partial f}{\partial t}&+\left\{f,{\bp}\cdot\big(\boldsymbol{U}-\boldsymbol{V}\big)+\frac12\left(|\boldsymbol{V}|^2-|\boldsymbol{U}|^2\right)\right\}+\bp\cdot\frac{\partial
f}{\partial\bq}-\frac{\partial
}{\partial\bq}\frac{\delta \Phi}{\delta
f}\cdot\frac{\partial f}{\partial\bp}=0 \,,
\end{align}
which involve no substantial modification with respect to the previous hybrid schemes.

\subsection{Conservation laws}\label{Hybrid-Euler-Discussion}
The Hamiltonian structure of fluid models has the main advantage of preserving most of the geometric features possessed by ordinary ideal fluids. One of the most significant features is the Kelvin circulation theorem
\begin{equation}\label{KN1}
\frac{\de}{\de t}\oint_{\gamma_t(\boldsymbol{u})}\!
\boldsymbol{u}\cdot\de\bq=0
\end{equation}
 that holds for the
velocity equation \eqref{Eul-Vlas-unt1}. Here the curve
$\gamma_t(\boldsymbol{u})$ is a loop moving with velocity
$\boldsymbol{u}$. The hybrid schemes available in the physics
literature often lack this fundamental feature, which is instead
available for the hybrid models presented above. Indeed, the
momentum shift $\bM=\bm+\int\!\bp\,f\,\de^3\bp$ together with its
resulting Poisson bracket \eqref{PB-pressurescheme1} guarantee the
Kelvin circulation theorem
\begin{equation}\label{KN2}
\frac{\de}{\de t}\oint_{\gamma_t(\boldsymbol{U})}
\left(\boldsymbol{U}-\frac1\rho\int\!\bp\,f\,\de^3\bp\right)\cdot\de\bq=0
\,.
\end{equation}
It is important to observe that the circulation law \eqref{KN2} is
\emph{different} from \eqref{KN1}, since the loop $\gamma_t$ moves
with velocity $\boldsymbol{u}$ in \eqref{KN1} and with velocity
$\boldsymbol{U}=\boldsymbol{u}+\rho^{-1}\!\int\!\bp\, f\,\de^3\bp$
in \eqref{KN2}. This is due to the fact that modifying the fluid
kinetic energy in the Hamiltonian \eqref{Hamiltonian-bar} yields a
different fluid velocity.

\begin{remark}[Relation to turbulence models]
Notice that exactly the same phenomenon occurs in certain
regularization models for fluid turbulence \cite{ChFr}, such as the
Navier-Stokes-$\alpha$ model \cite{FoHoTi}. These models have also
been called ``Kelvin-filtered'' turbulence models. Another property
of the preceding pressure-coupling schemes that is shared with
Kelvin-filtered models is that the fluid vorticity
$\boldsymbol\omega=\nabla\times\boldsymbol{u}$ is transported along
the modified velocity $\bU$, that is
${\partial_{t\,}\boldsymbol\omega+\nabla\times\left(\boldsymbol{U}\times\boldsymbol\omega\right)=0}$.
This equation evidently possesses vortex filament solutions that
project to point vortices in two dimensions. The appearance of this
close analogy between pressure-coupling hybrid schemes and fluid
turbulence models looks promising for further investigation,
especially in relation to the energy splitting appearing in the
Hamiltonian \eqref{hybrid-Ham1}.
\end{remark}

The circulation conservation \eqref{KN2} is valid in any of the
pressure schemes presented previously. In the third
pressure-coupling scheme, the Kelvin circulation theorem becomes
\[
\frac{\de}{\de t}\oint_{\gamma_t(\bU)}
\left(\boldsymbol{U}-\left(D-\int\!f\,\de^3\bp\right)^{\!-1}\!\int\!\bp\,f\,\de^3\bp\right)\cdot\de\bq=0
\,.
\]

Moreover, the previous hybrid schemes possess the helicity conservation relation
\[
\frac{\de}{\de
t}\int\left(\boldsymbol{U}-\frac1\rho\int\!\bp\,f\,\de^3\bp\right)\cdot\nabla\times\left(\boldsymbol{U}-\frac1\rho\int\!\bp\,f\,\de^3\bp\right)\de^3\bq=0 \,.
\]
Notice the above conserved quantity is a Casimir of the Poisson
bracket \eqref{PB-pressurescheme1} and in two dimensions, this
Casimir can be used to approach the problem of nonlinear stability,
following the reference \cite{HoMaRaWe}. Another Casimir of the
Poisson bracket \eqref{PB-pressurescheme1} is given by
$C=\int\Phi(f)\,\de^3\bq\,\de^3\bp$, for an arbitrary function
$\Phi$. This is the Casimir typically associated to Vlasov
dynamics and it includes the total mass $\int\!f\,
\de^3\bq\,\de^3\bp$. The search for other Casimir functionals is
left for future investigation.

\section{Hamiltonian hybrid MHD models}
The purpose of this section is to formulate a kinetic-multifluid model that leads to the formulation of hybrid MHD schemes. As we shall see, the pressure-coupling schemes can be obtained by repeating systematically all the steps outlined in the previous sections, upon including electromagnetic fields appropriately. On the other hand there is another relevant hybrid MHD scheme, which involves a current coupling. This section derives both current and pressure coupling schemes and provides their Hamiltonian structures. At the end of this section, the two schemes are compared in terms of their energy-balance properties.

\subsection{Kinetic-multifluid system}\label{kinetic-mf-system}
The equations of motion for a multifluid plasma
in the presence of an energetic component read as
\begin{align}\label{multi-fluid-momentum}
&\rho_s\frac{\partial\bu_s}{\partial t}+\rho_s\left(\bu_s\cdot\nabla\right)\bu_s
=
a_s\rho_s\left(\bE+\bu_s\times\bB\right)-\nabla\mathsf{p}_s
\\
&
\frac{\partial\rho_s}{\partial t}+\operatorname{div}\left(\rho_s\bu_s\right)=0
\\\label{multi-fluid-Vlasov}
&
\frac{\partial f}{\partial t}+\frac{\bp}{m_h}\cdot\frac{\partial f}{\partial \bq}+q_h\left(\bE+\frac{\bp}{m_h}\times\bB\right)\cdot\frac{\partial f}{\partial \bp}=0
\\
&
\mu_0\epsilon_0\frac{\partial\bE}{\partial t}=\nabla\times\bB-\mu_0\sum_s a_s\rho_s\bu_s-\mu_0\,a_h\int\!\bp\,f\,\de^3\bp
\\
&
\frac{\partial\bB}{\partial t}=-\nabla\times\bE
\\
&
\epsilon_0
\operatorname{div}\bE=\sum_s a_s\rho_s+q_h\int f\,\de^3\bp
\,,\qquad
\operatorname{div}\bB=0
\end{align}
where $a_s=q_s/m_s$ is the charge-to-mass ratio of the fluid species
$s$, while $\rho_s$ and $\bu_s$ are its mass density and velocity.
Notice that in the above system we have restored the hot particle
mass $m_h$. When the energetic component is absent, the Hamiltonian
structure of this system was discovered by Spencer
\cite{Spencer,SpKa} and analyzed further in \cite{MaWeRaScSp}.
Combining this Hamiltonian structure with that of the Maxwell-Vlasov
system \cite{MaWe1,Weinstein,Morrison2bis} yields the following Poisson bracket
for the kinetic-multifluid system
\begin{align}\nonumber
\{F,G\}=& \sum_s\int{\bm}_s\cdot \left[\frac{\delta
F}{\delta{\bm}_s},\frac{\delta G}{\delta
{\bm}_s}\right]\de^3\bq
+
\int f\left\{\frac{\delta
F}{\delta f},\frac{\delta G}{\delta f}\right\}\dvol
\\\nonumber
&
-
\sum_s\int \rho_s \left(\frac{\delta F}{\delta \bm_s}\cdot\nabla\frac{\delta G}{\delta \rho_s}-\frac{\delta G}{\delta \bm_s}\cdot \nabla\frac{\delta F}{\delta \rho_s}\right)\de^3\bq
\\\nonumber
&
+
\sum_s\int\! a_s\rho_s \left(
\frac{\delta
F}{\delta \bm_s}\cdot\frac{\delta
G}{\delta \bD}-\frac{\delta
G}{\delta \bm_s}\cdot\frac{\delta
F}{\delta \bD}+\bB\cdot\frac{\delta
F}{\delta \bm_s}\times\frac{\delta
G}{\delta \bm_s}\right)\de^3\bq
\\\nonumber
&
+
q_h
\int f \left(\frac{\partial}{\partial \bp}\frac{\delta
F}{\delta f}\cdot\frac{\delta
G}{\delta \bD}-\frac{\partial}{\partial \bp}\frac{\delta
G}{\delta f}\cdot\frac{\delta
F}{\delta \bD}+\bB\cdot\left(\frac{\partial}{\partial \bp}\frac{\delta
F}{\delta f}\times\frac{\partial}{\partial \bp}\frac{\delta
G}{\delta f}\right)\right)\dvol
\\
&+
\int\left(\frac{\delta
F}{\delta \bD}\cdot\operatorname{curl}\frac{\delta
G}{\delta \bB}-\frac{\delta
G}{\delta  \bD}\cdot\operatorname{curl}\frac{\delta
F}{\delta \bB}\right)\de^3\bq
\,,
\label{PB-kinetic-multifluid}
\end{align}
where evidently $\bm_s=\rho_s\bu_s$. To complete the Hamiltonian structure of the kinetic-multifluid model, one writes the Hamiltonian
\begin{multline}
H(\bm_s,\rho_s,f,\bD,\bB)=\frac1{2}\sum_s\int \frac{\left|\bm_s\right|^2}{\rho_s}\,\de^3\bq+\frac1{2\m_h}\int f \left|\bp\right|^2\dvol+\sum_s\int\!\rho_s\, \mathcal{U}(\rho_s)\,\de^3\bq
\\
+\frac12\left(\frac1{\epsilon_0}\int \left|\bD\right|^2\de^3\bq+\frac1{\mu_0}\int \left|\bB\right|^2\de^3\bq\right)
\,,
\end{multline}
where the electric induction $\bD=\epsilon_0\bE$ is (minus) the conjugate
variable to the vector potential $\mathbf{A}$, as explained in
\cite{Holm1,Holm2}.

\begin{remark}[Extending to several energetic components]
The above kinetic-multifluid system can be easily extended to the
case of several energetic components, each described by its own
Vlasov equation. Then, each of these Vlasov equations can be
rewritten in terms of the first two  moments $(n,\bK)$ in order to
isolate kinetic pressure terms. A similar approach was followed in
\cite{ChengJohnson}.
\end{remark}

\subsection{Current-coupling hybrid MHD scheme}
In usual situations, one is interested in one-fluid models. Thus, it is customary to specialize to the two-fluid system and to neglect the inertia of one species (electrons). This last approximation is equivalent to taking the limit $m_2\to 0$ for the second species in the total fluid momentum equation. Under this assumption, the sum of the equations \eqref{multi-fluid-momentum} for $s=1,2$ produces the equation
\begin{equation}\label{TotMom}
\rho_1\frac{\partial\bu_1}{\partial t}+\rho_1\left(\bu_1\cdot\nabla\right)\bu_1
=
\left(a_1\rho_1+a_2\rho_2\right)\bE+\left(a_1\rho_1\bu_1+a_2\rho_2\bu_2\right)\times\bB-\nabla\mathsf{p}_1
\end{equation}
Also, upon assuming neutrality by letting $\epsilon_0\to0$, the electromagnetic fields satisfy the equations
\begin{align}
&\sum_s a_s\rho_s\bu_s=\frac1{\mu_0}\nabla\times\bB-\,a_h\int\!\bp\,f\,\de^3\bp
\\
&
\frac{\partial\bB}{\partial t}=-\nabla\times\bE
\\
&
\sum_s a_s\rho_s=-q_h\int f\,\de^3\bp
\,,\qquad
\operatorname{div}\bB=0
\end{align}
Then, equation \eqref{TotMom} becomes
\begin{equation}
\rho\frac{\partial\bu}{\partial t}+\rho\left(\bu\cdot\nabla\right)\bu
=
-\left(q_h\int\! f\,\de^3\bp\right)\bE+\left(\frac1{\mu_0}\nabla\times\bB-\,a_h\int\!\bp\,f\,\de^3\bp\right)\times\bB-\rho\nabla\mathsf{p}
\,,
\end{equation}
where we have dropped labels for convenience. Finally, inserting Ohm's ideal law $\bE+\bu\times\bB=0$, the kinetic two-fluid system becomes
\begin{align}\label{cc-hybrid-momentum}
&\rho\frac{\partial\bu}{\partial t}+\rho\left(\bu\cdot\nabla\right)\bu
=
\left(q_h\, \bu\int\! f\,\de^3\bp-\,a_h\int\!\bp\,f\,\de^3\bp+\frac1{\mu_0}\nabla\times\bB\right)\times\bB-\rho\nabla\mathsf{p}
\\
& \frac{\partial\rho}{\partial
t}+\operatorname{div}\left(\rho\bu\right)=0 \label{cc-hybrid-mass}
\\
&
\frac{\partial f}{\partial t}+\frac{\bp}{m_h}\cdot\frac{\partial f}{\partial \bq}+q_h\left(\frac{\bp}{m_h}-\bu\right)\times\bB\cdot\frac{\partial f}{\partial \bp}=0
\\
&
\frac{\partial\bB}{\partial t}=\nabla\times\left(\bu\times\bB\right)
\label{cc-hybrid-end}
\,,
\end{align}
which is identical to the current-coupling hybrid scheme presented
in \cite{FuPark,ParkEtAl,BeDeCh}, except the fact that particle
dynamics is governed by the Vlasov equation rather than its
gyrokinetic counterpart. Notice that the above system does not make
any assumption on the form of the Vlasov distribution for the
energetic particles. Therefore this system applies to a whole
variety of possible situations.

We now turn our attention to the energy balance, that is we ask whether the above current-coupling system possesses a Hamiltonian structure. Remarkably, a positive answer is provided by the following Poisson bracket
\begin{align}\nonumber
\{F,G\}=& \int{\bm}\cdot \left[\frac{\delta
F}{\delta{\bm}},\frac{\delta G}{\delta
{\bm}}\right]\de^3\bq
-
\int \rho \left(\frac{\delta F}{\delta \bm}\cdot\nabla\frac{\delta G}{\delta \rho}-\frac{\delta G}{\delta \bm}\cdot \nabla\frac{\delta F}{\delta \rho}\right)\de^3\bq
\\\nonumber
&+
q_h
\int\! f \,\bB\cdot\left(\frac{\delta F}{\delta \bm}\times
\frac{\delta G}{\delta \bm}-\frac{\delta F}{\delta \bm}
\times
\frac{\partial}{\partial \bp}\frac{\delta
G}{\delta f}+\frac{\delta G}{\delta \bm}
\times
\frac{\partial}{\partial \bp}\frac{\delta
F}{\delta f}\right)\dvol
\\\nonumber
&+
\int \!f\left(\left\{\frac{\delta
F}{\delta f},\frac{\delta G}{\delta f}\right\}+q_h\,\bB\cdot\frac{\partial}{\partial \bp}\frac{\delta
F}{\delta f}
\times
\frac{\partial}{\partial \bp}\frac{\delta
G}{\delta f}\right)\dvol
\\
&
-
\int \bB\cdot\left(\frac{\delta F}{\delta \bm}\times\nabla\times\frac{\delta G}{\delta \bB}-\frac{\delta G}{\delta \bm}\times\nabla\times\frac{\delta F}{\delta \bB}\right)\de^3\bq
\,,
\label{PB-current-hybridMHD}
\end{align}
together with the Hamiltonian
\begin{multline}
H(\bm,\rho,f,\bB)=\frac1{2}\int \frac{\left|\bm\right|^2}{\rho}\,\de^3\bq+\frac1{2\m_h}\int f \left|\bp\right|^2\dvol+\int\! \rho\ \mathcal{U}(\rho)\,\de^3\bq
+\frac1{2\mu_0}\int \left|\bB\right|^2\de^3\bq
\,,
\label{Ham-preMHD}
\end{multline}
where the fluid velocity is replaced by the fluid momentum
$\bm=\rho\bu$. The nature of the Hamiltonian structure
\eqref{PB-current-hybridMHD} will be discussed later in relation to
the MHD bracket (see theorem \ref{theorem1}).

\subsection{Pressure-coupling hybrid MHD schemes}
In this section we show how the Hamiltonian structure of the previous current-coupling scheme represents a helpful tool for the formulation of a pressure-coupling scheme. This scheme establishes an equation for the total momentum
\[
\bM=\bm+\int\bp\,f\,\de^3\bp\,,
\]
under the assumption that the averaged kinetic momentum $\bK=\int\!\bp\,f\,\de^3\bp$ does not contribute to the total energy of the system. The definition of the momentum coordinate $\bM$ yields the simple functional relations
\[
\frac{\delta F}{\delta \bm}=\frac{\delta \bar{F}}{\delta \bM}
\,,\qquad\quad\frac{\delta F}{\delta \rho}=\frac{\delta \bar{F}}{\delta \rho}
\,,\qquad\quad\frac{\delta F}{\delta \bB}=\frac{\delta \bar{F}}{\delta \bB}
\,,\qquad\quad
\frac{\delta F}{\delta f}=\frac{\delta \bar{F}}{\delta f}+\bp\cdot\frac{\delta \bar{F}}{\delta \bM}
\]
which can be easily substituted in the bracket \eqref{PB-current-hybridMHD} to produce the new Hamiltonian structure
\begin{align}\nonumber
\{F,G\}=& \int{\bM}\cdot \left[\frac{\delta
F}{\delta{\bM}},\frac{\delta G}{\delta
{\bM}}\right]\de^3\bq
-
\int \rho \left(\frac{\delta F}{\delta \bM}\cdot\nabla\frac{\delta G}{\delta \rho}-\frac{\delta G}{\delta \bM}\cdot \nabla\frac{\delta F}{\delta \rho}\right)\de^3\bq
\\\nonumber
&
+
\int f\left(\left\{\frac{\delta
F}{\delta f},\frac{\delta G}{\delta f}\right\}
+q_h\,
\bB\cdot\frac{\partial}{\partial \bp}\frac{\delta
F}{\delta f}
\times
\frac{\partial}{\partial \bp}\frac{\delta
G}{\delta f}
\right)\dvol
\\\nonumber
&+\int f\left(\left\{\frac{\delta
F}{\delta f},\bp\cdot\frac{\delta G}{\delta \bM}\right\}-\left\{\frac{\delta
G}{\delta f},\bp\cdot\frac{\delta F}{\delta \bM}\right\}\right)\dvol
\\ \label{PB-pressure-hybridMHD}
&
+
\int \bB\cdot\left(\frac{\delta F}{\delta \bM}\times\nabla\times\frac{\delta G}{\delta \bB}-\frac{\delta G}{\delta \bM}\times\nabla\times\frac{\delta F}{\delta \bB}\right)\de^3\bq
\end{align}
Let us now express the Hamiltonian \eqref{Ham-preMHD} in terms of the total momentum $\bM$: we obtain
\begin{equation}\label{exact-hamiltonian}
H=\frac12\int \frac{\left|\bM-\bK\right|^2}{\rho}\, \de^3\bq\ +\frac1{2\m_h}\int f \left|\bp\right|^2\dvol
+ \int\!\rho\,\, \mathcal{U}(\rho)\,\de^3\bq
+\frac1{2\mu_0}\int \left|\bB\right|^2\de^3\bq
\,.
\end{equation}
Then,
the Poisson bracket \eqref{PB-pressure-hybridMHD} yields the velocity equation
\begin{align*}
\frac{\partial \bU}{\partial t}+(\bu\cdot\nabla)\bu=&-\frac1\rho\nabla{\sf p}-\frac1{\m_h \rho}\operatorname{div}\int\!\bp\bp\, f\,\de^3\bp
+\frac1{\mu_0 \rho}\operatorname{curl}\bB\times \bB
\,,
\end{align*}
while the other equations remain identical to those in
\eqref{cc-hybrid-mass}-\eqref{cc-hybrid-end}. Before proceeding
further, we remark that neglecting all terms involving the averaged
kinetic momentum $\int\!\bp\, f\,\de^3\bp$ (so that
$\bu\cdot\nabla\bu\simeq\bU\cdot\nabla\bU$) produces the hybrid MHD
model in \cite{Cheng} (although the general Vlasov equation is
adopted here, rather than a gyrokinetic equation). However, this
crucial step breaks the energy-conserving nature of the system and,
when an energy balance equation is required, the model needs
substantial modifications.

\subsubsection{First pressure-coupling MHD scheme} A structure-preserving approximation can be easily obtained by neglecting the averaged kinetic momentum directly in the expression of the hybrid Hamiltonian \eqref{exact-hamiltonian} (so that $\bM\simeq\bm$), which then becomes
\begin{multline}\label{PB-pressure-hybridMHD-Hamiltonian}
H(\bM,\rho,f,\bB)=\frac12\int \frac{\left|\bM\right|^{2}}\rho\de^3\bq\ +\frac1{2\m_h}\int f \left|\bp\right|^2\dvol
+ \int \!\rho\ \mathcal{U}(\rho)\,\de^3\bq
+\frac1{2\mu_0}\int \left|\bB\right|^2\de^3\bq
\end{multline}
Then, the Poisson bracket \eqref{PB-pressure-hybridMHD}  yields the final equations
\begin{align}\label{hybridMHD1}
&\frac{\partial \bU}{\partial t}+(\bU\cdot\nabla)\bU=-\frac1\rho\nabla{\sf p}-\frac1{\m_h \rho}\operatorname{div}\int \bp\bp f\,\de^3\bp
+\frac1{\mu_0 \rho}\operatorname{curl}\bB\times \bB
\\\label{hybridMHD2}
&\frac{\partial f}{\partial
t}+\left(\boldsymbol{U}+\frac{\bp}{m_h}\right)\cdot\frac{\partial
f}{\partial \bq}-\frac{\partial f}{\partial
\bp}\cdot\nabla\boldsymbol{U}\cdot\bp+a_h\,{\bp}\times \bB\cdot\frac{\partial f}{\partial\bp}=0
\\\label{hybridMHD3}
&\frac{\partial \rho}{\partial t}+\operatorname{div}(\rho\,\bU)=0
\,,\qquad
\frac{\partial \bB}{\partial t}=\operatorname{curl}\left(\bU\times\bB\right)
\,,
\end{align}
where the term $\{f,\bp\cdot\boldsymbol{U}\}$ again appears to
balance the pressure term in the energy conservation. Except for the
Vlasov equation \eqref{hybridMHD2}, the above system is identical to
the scheme presented in \cite{FuPark,ParkEtAl}.

\subsubsection{Second pressure-coupling MHD scheme} Notice that the following expression of the Hamiltonian is also possible
\begin{multline}\label{hybrid-MHD-pc2-Hamiltonian}
H(\bM,\rho,f,\bB)=\frac12\int \frac{\left|\bM\right|^{2}}\rho\de^3\bq\ +\frac1{2\m_h}\int\! f \left|\bp-\frac{\bK}n\right|^2\dvol
+ \int \!\rho\ \mathcal{U}(\rho)\,\de^3\bq
+\frac1{2\mu_0}\int \left|\bB\right|^2\de^3\bq
\,,
\end{multline}
which in turn yields the equations
\begin{align}\label{hybrid-MHD-pc2-velocity}
&\frac{\partial \bU}{\partial t}+(\bU\cdot\nabla)\bU=-\frac1\rho\nabla{\sf p}-\frac1{\m_h \rho}\operatorname{div}\!\int\!\left(\bp-\boldsymbol{V}\right)^{\otimes\,2\!}f\,\de^3\bp
+\frac1{\mu_0 \rho}\operatorname{curl}\bB\times \bB
\\
&\frac{\partial f}{\partial t}+\left(\frac{\bp-\boldsymbol{V}}{m_h}+\boldsymbol{U}\right)\cdot\frac{\partial
f}{\partial \bq}-\frac{\partial f}{\partial
\bp}\cdot\left(\frac{\partial
\boldsymbol{U}}{\partial\bq}\cdot\bp+\frac{\partial
\boldsymbol{V}}{\partial\bq}\cdot\frac{\bp-\boldsymbol{V}}{m_h}\right)+a_h\left({\bp}-\boldsymbol{V}\right)\times \bB\cdot\frac{\partial f}{\partial\bp}=0
\\
&\frac{\partial \rho}{\partial t}+\operatorname{div}(\rho\,\bU)=0
\,,\qquad
\frac{\partial \bB}{\partial t}=\operatorname{curl}\left(\bU\times\bB\right)
\,.
\label{hybrid-MHD-pc2-end}
\end{align}
where $\boldsymbol{V}=\bK/n$. Notice that although complicated, the
above kinetic equation yields the simple advection equation
$\partial_t n+\operatorname{div}(n\,\boldsymbol{U})=0$ for the
zero-th moment $n=\int\!f\,\de^3\bp$. Then, the total density
$D=\rho+m_h\,n$ also satisfies $\partial_t
D+\operatorname{div}(D\,\boldsymbol{U})=0$, which is the total
density equation appearing Cheng's hybrid model \cite{Cheng}. This
transport equation for $D$ and the velocity equation above both
appear exactly the same in the hybrid scheme presented in
\cite{KimEtAl,TaBrKi}. As a conclusion to this section, we remark
that in the present Hamiltonian setting a purely advection equation
for $D$ is necessarily accompanied by the presence of the relative
pressure tensor
$\int\left(\bp-\boldsymbol{V}\right)^{\otimes\,2\!}f\,\de^3\bp$ in
the velocity equation, instead of the absolute pressure
$\int\bp^{\otimes\,2\!}\,f\,\de^3\bp$. This fact is suggestive that
the scheme in \cite{KimEtAl,TaBrKi} is preferable to the one
presented in \cite{Cheng}

\subsection{The Hamiltonian structure and its consequences}\label{PBprinciple}
In the previous sections, the current-coupling hybrid kinetic scheme
\eqref{cc-hybrid-momentum}-\eqref{cc-hybrid-end} was derived by
inserting neutrality and Ohm's ideal law in the general Hamilton's
equations arising from the Poisson bracket
\eqref{PB-kinetic-multifluid}. Then, upon assuming negligible
density and momentum for the energetic component, a simple momentum
shift produced the pressure-coupling hybrid schemes. This section
explains the origin of the hybrid Hamiltonian structures, starting
from a direct sum Poisson bracket. In particular, the reason why the
hybrid Poisson brackets \eqref{PB-current-hybridMHD} and
\eqref{PB-pressure-hybridMHD} both satisfy the Jacobi identity is
not at all evident and it needs particular care.

We shall proceed by first deriving the bracket \eqref{PB-current-hybridMHD} from the direct sum bracket
\begin{align}\nonumber
\{F,G\}=& \int{\bN}\cdot \left[\frac{\delta
F}{\delta{\bN}},\frac{\delta G}{\delta {\bN}}\right]\de^3\bq - \int
\rho \left(\frac{\delta F}{\delta \bN}\cdot\nabla\frac{\delta
G}{\delta \rho}-\frac{\delta G}{\delta \bN}\cdot \nabla\frac{\delta
F}{\delta \rho}\right)\de^3\bq
\\ \nonumber
& - \int \mathbf{A}\cdot \left(\frac{\delta F}{\delta
\bm}\operatorname{div}\frac{\delta G}{\delta
\mathbf{A}}+\frac{\delta G}{\delta
\bN}\operatorname{div}\frac{\delta F}{\delta
\mathbf{A}}-\nabla\times\left(\frac{\delta F}{\delta
\bN}\times\frac{\delta G}{\delta \mathbf{A}}-\frac{\delta G}{\delta
\bN}\times\frac{\delta F}{\delta \mathbf{A}}\right)\right)\de^3\bq
\\
&
+
\int \hat{f}\left\{\frac{\delta
F}{\delta \hat{f}},\frac{\delta G}{\delta \hat{f}}\right\}\dvol
\label{PB-MHD+Vlasov}
\,,
\end{align}
where the first two lines coincide with the bracket of ideal
compressible MHD \cite{HolmKupershmidt,MorrisonGreene,Morrison2},
while the last term determines the Poisson bracket of the Vlasov
equation. The momentum variable $\bN$ is a new variable that will be
given physical meaning in theorem \ref{theorem1}. Notice that in the
above bracket, the Vlasov distribution $\hat{f}(\bq,\boldsymbol{p})$
is expressed as a function of the mechanical momentum variable
${\boldsymbol{p}=m_h\mathbf{v}+q_a\mathbf{A}}$ and \emph{not} as a
function of the variable $\bp=m_h\mathbf{v}$. Thus one has
\[
\hat{f}(\bq,\boldsymbol{p})=\hat{f}(\bq,\bp+q_h\mathbf{A})=f(\bq,\bp)
\]
(Notice the different notation: the mechanical momentum is denoted
by the italic bold font $\boldsymbol{p}$, while the roman bold
notation $\bp$ corresponds to $m_h \mathbf{v}$.)

\begin{remark}[Lie algebra of the current-coupling scheme]
The Poisson bracket \eqref{PB-MHD+Vlasov} is a Lie-Poisson bracket
on the direct-sum Lie algebra
\[
C^\infty(\Bbb{R}^6)\oplus\Big(\mathfrak{X}(\Bbb{R}^3)\,\circledS\left(C^\infty(\Bbb{R}^3)\oplus\Omega^2(\Bbb{R}^3)\right)\Big)
\]
where $C^\infty(\Bbb{R}^6)$ is the algebra of Hamiltonian functions
(dual to Vlasov distributions), while the variables
$(\bN,\rho,\mathbf{A})$ belong to the dual space of the
semidirect-product Lie algebra
\[
\mathfrak{X}(\Bbb{R}^3)\,\circledS{\left(C^\infty(\Bbb{R}^3)\oplus\Omega^2(\Bbb{R}^3)\right)}
\,.
\]
Here $\Omega^2(\Bbb{R}^3)$ is the space of differential two-forms
containing the electric induction $\bD\cdot\de\mathbf{S}$ (where
$\de\mathbf{S}=\hat{\mathbf{n}}\,\de S$ denotes the product of an
infinitesimal surface element $\de S$ with its normal unit vector
$\hat{\mathbf{n}}$).
\end{remark}

The Poisson bracket \eqref{PB-current-hybridMHD} is equivalent to \eqref{PB-MHD+Vlasov} in the following sense:
\begin{theorem}\label{theorem1}
The Poisson bracket \eqref{PB-MHD+Vlasov} transforms to the bracket \eqref{PB-current-hybridMHD} under the momentum shift
\[
\bm=\bN+q_h\mathbf{A}\int\!\hat{f}(\bq,\boldsymbol{p})\,\de^3\boldsymbol{p}=\bN+q_h\mathbf{A}\int{f}(\bq,\bp)\,\de^3\bp
\,.
\]
\end{theorem}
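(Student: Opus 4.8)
The statement is a change-of-variables identity, and the reason for proving it is the one flagged just above the theorem: \eqref{PB-MHD+Vlasov} is manifestly a Lie--Poisson bracket, the momentum shift is --- modulo the standard treatment of the gauge freedom in $\mathbf{A}$, cf.\ \cite{Holm1,Holm2} --- an isomorphism of the underlying Poisson manifolds, and therefore once one checks that \eqref{PB-MHD+Vlasov} pushes forward to \eqref{PB-current-hybridMHD}, the Jacobi identity for \eqref{PB-current-hybridMHD} follows for free. The plan is thus: (i)~record the functional chain rule induced by the transformation, taken to consist of the stated shift together with the relabellings $\mathbf{A}\mapsto\bB=\nabla\times\mathbf{A}$ and $\hat f\mapsto f$ already agreed in the text; (ii)~substitute it into \eqref{PB-MHD+Vlasov} term by term; (iii)~reorganise the result, via vector identities and integration by parts, into the five groups of \eqref{PB-current-hybridMHD}.

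Writing $F=\bar F$ for a functional expressed in the two sets of variables and matching the coefficients of the independent variations $\delta\bN$, $\delta\rho$, $\delta\mathbf{A}$, $\delta\hat f$, one obtains $\delta F/\delta\bN=\delta\bar F/\delta\bm$, $\delta F/\delta\rho=\delta\bar F/\delta\rho$, together with
\[
\frac{\delta F}{\delta\hat f}=\frac{\delta\bar F}{\delta f}+q_h\,\mathbf{A}\cdot\frac{\delta\bar F}{\delta\bm}\,,\qquad
\frac{\delta F}{\delta\mathbf{A}}=q_h\,n\,\frac{\delta\bar F}{\delta\bm}+\operatorname{curl}\frac{\delta\bar F}{\delta\bB}-q_h\!\int\! f\,\frac{\partial}{\partial\bp}\frac{\delta\bar F}{\delta f}\,\de^3\bp\,,
\]
where $n=\int\!\hat f\,\de^3\boldsymbol{p}=\int\! f\,\de^3\bp$ and the first relation is understood with the argument correspondence $\boldsymbol{p}=\bp+q_h\mathbf{A}$. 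This is the precise analogue of \eqref{chainrule}, now with kernel $q_h\mathbf{A}$ in place of $\bp$; the $\operatorname{curl}$ term is the usual one for passing from the vector potential to the magnetic field, and the last term records that at fixed $f$ a variation of $\mathbf{A}$ drags along the fibre translation $\delta\hat f=-q_h\,\delta\mathbf{A}\cdot\partial_\bp f$.

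Substituting into \eqref{PB-MHD+Vlasov}: in the Vlasov term I change the momentum variable from $\boldsymbol{p}$ to $\bp$, whereupon the canonical bracket in $\boldsymbol{p}$ becomes the canonical bracket in $\bp$ plus $q_h\,\bB\cdot(\partial_\bp\tfrac{\delta F}{\delta f}\times\partial_\bp\tfrac{\delta G}{\delta f})$ --- the classical Maxwell--Vlasov untangling --- which is the magnetic term on the third line of \eqref{PB-current-hybridMHD}; since $q_h\,\mathbf{A}\cdot\delta\bar F/\delta\bm$ carries no $\bp$-dependence, the remaining cross contributions of that bracket are bare-$\mathbf{A}$ terms, set aside for the moment. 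In the MHD block I put $\bN=\bm-q_h\mathbf{A}\,n$ into $\int\bN\cdot[\delta F/\delta\bN,\delta G/\delta\bN]$ and insert the expression for $\delta F/\delta\mathbf{A}$; expanding the vector-field bracket by $\nabla\times(\boldsymbol X\times\boldsymbol Y)=\boldsymbol X\operatorname{div}\boldsymbol Y-\boldsymbol Y\operatorname{div}\boldsymbol X+(\boldsymbol Y\cdot\nabla)\boldsymbol X-(\boldsymbol X\cdot\nabla)\boldsymbol Y$ and integrating by parts, the $\operatorname{curl}(\delta\bar F/\delta\bB)$ piece yields the ideal-MHD magnetic bracket in the $\bB$-variable, the $q_h n\,\delta\bar F/\delta\bm$ piece yields (together with the corresponding contribution from the $\bN$-shift) the term $q_h\int\! f\,\bB\cdot(\delta F/\delta\bm\times\delta G/\delta\bm)$, and the $-q_h\int f\,\partial_\bp(\delta\bar F/\delta f)$ piece yields the mixed terms $\mp q_h\int\! f\,\bB\cdot(\delta F/\delta\bm\times\partial_\bp\,\delta G/\delta f)$; the $\bm$-transport, the $\rho$-transport and the bare Vlasov bracket carry over directly.

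The real work is step~(iii). After the substitution one is left with a crowd of terms, several of which contain $\mathbf{A}$ \emph{explicitly} and not merely through $\bB=\nabla\times\mathbf{A}$ --- the $\nabla n\times\mathbf{A}$ contributions thrown off by $\nabla\times(n\,\cdot\,)$, the bare-$\mathbf{A}$ cross pieces from the Vlasov term, and the analogous pieces from substituting $\delta F/\delta\mathbf{A}$ into the $\operatorname{div}$-part of the MHD bracket. The crux is to verify that these cancel in pairs; this rests on the antisymmetry of the MHD structure, the identity $\operatorname{div}\operatorname{curl}=0$, and repeated integration by parts in both $\bq$ and $\bp$. That the cancellation \emph{must} take place is guaranteed a priori, since \eqref{PB-current-hybridMHD} is gauge independent; conceptually it expresses the fact that $q_h\,\mathbf{A}\,n$ is exactly the momentum-map correction turning the naive momentum $\bN$ into the physical fluid momentum $\bm$ --- which is also why \eqref{PB-current-hybridMHD} is a genuine Poisson bracket, the point the theorem is really making. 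Once the bare-$\mathbf{A}$ terms are cleared, matching the survivors against \eqref{PB-current-hybridMHD} is routine.
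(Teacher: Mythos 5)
Your proposal is correct and follows essentially the same route as the paper's appendix proof: the same functional chain rule (your composite formula for $\delta F/\delta\mathbf{A}$ is exactly the paper's three separate relations --- the $q_h\,n\,\delta\bar F/\delta\bm$ shift, the fibre-translation term $-q_h\!\int\!f\,\partial_\bp(\delta\bar F/\delta f)\,\de^3\bp$ from $\hat f\to f$, and $\delta F/\delta\mathbf{A}=\nabla\times\delta F/\delta\bB$ --- composed in one step rather than applied sequentially), the same untangling identity $\{\hat h,\hat k\}=\{h,k\}+q_h\bB\cdot(\partial_\bp h\times\partial_\bp k)$, and the same vector identities to convert the bare-$\mathbf{A}$ terms into curls. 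The only substantive difference is that the paper performs the step-(iii) bookkeeping explicitly term by term (the $\nabla n\times\mathbf{A}$ pieces combine via $\nabla\times(k\mathbf{a})=k\nabla\times\mathbf{a}+\nabla k\times\mathbf{a}$ into the $q_h n\,\bB\cdot(\delta F/\delta\bm\times\delta G/\delta\bm)$ term, and the Vlasov cross terms $\{\delta F/\delta\hat f,\mathbf{A}\cdot\delta G/\delta\bm\}$ contribute to the mixed $\bB$ terms rather than simply cancelling), whereas you assert the outcome on a priori grounds; this is a matter of detail, not of method.
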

This result is proved in the appendix \ref{appendix1}. Perhaps the most direct physical consequence resides in the following circulation theorem
\begin{corollary}[Kelvin's theorem for the current-coupling scheme]\label{corollary1} For every loop $\gamma_t(\bu)$ moving with velocity $\bu=\bm/\rho$,
the current-coupling scheme \eqref{cc-hybrid-momentum}-\eqref{cc-hybrid-end} possesses the following circulation
theorem
\begin{multline*}
\frac{\de}{\de
t}\oint_{\gamma_t(\bu)}\left(\bu-q_h\,\frac{n}{\rho}\,\mathbf{A}\right)\cdot\de\bq=-\,\oint_{\gamma_t(\bu)}\frac1\rho\,
\bB\times\!\Big(\mu_0^{-1}\nabla\times\bB-a_h\left(\bK-m_h\,n\bu\right)\!\Big)\cdot\de\bq
\\+a_h\oint_{\gamma_t(\bu)}\frac1\rho\,\big(\nabla\cdot\left(\bK-m_h\,n\bu\right)\big)\,\mathbf{A}\cdot\de\bq\,,
\end{multline*}
where $\bK=\int\!\bp\,f\,\de^3\bp$.
\end{corollary}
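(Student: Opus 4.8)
The plan is to exhibit the advected circulation one-form, differentiate its loop integral along the flow of $\bu=\bm/\rho$, and read off the defect from the equations of motion. The crucial point, supplied by Theorem~\ref{theorem1}, is that the integrand in the asserted circulation theorem is exactly the \emph{transformed} fluid momentum per unit mass: $\bu-q_h(n/\rho)\mathbf{A}=\bN/\rho$, with $\bN=\bm-q_h\,n\,\mathbf{A}$, $n=\int\!f\,\de^3\bp$, and $\mathbf{A}$ a vector potential ($\nabla\times\mathbf{A}=\bB$) transported consistently with the bracket \eqref{PB-MHD+Vlasov}, i.e. by Lie advection $\partial_t\mathbf{A}+(\bu\cdot\nabla)\mathbf{A}+\nabla\bu\cdot\mathbf{A}=0$ (equivalently $\partial_t\mathbf{A}=\bu\times\bB-\nabla(\bu\cdot\mathbf{A})$, which is compatible with $\partial_t\bB=\nabla\times(\bu\times\bB)$ in \eqref{cc-hybrid-end}).

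First I would record the transport identity for a loop $\gamma_t(\bu)$ moving with velocity $\bu$: for any vector field $\boldsymbol{w}$, $\frac{\de}{\de t}\oint_{\gamma_t(\bu)}\boldsymbol{w}\cdot\de\bq=\oint_{\gamma_t(\bu)}\big(\partial_t\boldsymbol{w}+(\bu\cdot\nabla)\boldsymbol{w}+\nabla\bu\cdot\boldsymbol{w}\big)\cdot\de\bq$, and, using the continuity equation \eqref{cc-hybrid-mass}, its rescaled version for a momentum density $\boldsymbol{\mu}$: $\frac{\de}{\de t}\oint_{\gamma_t(\bu)}\frac{\boldsymbol{\mu}}{\rho}\cdot\de\bq=\oint_{\gamma_t(\bu)}\frac1\rho\big(\partial_t\boldsymbol{\mu}+\operatorname{div}(\bu\otimes\boldsymbol{\mu})+\nabla\bu\cdot\boldsymbol{\mu}\big)\cdot\de\bq$. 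Since exact one-forms integrate to zero around a closed loop, gradient terms may be discarded throughout.

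Next I would compute $\partial_t\bN=\partial_t\bm-q_h(\partial_tn)\mathbf{A}-q_h\,n\,\partial_t\mathbf{A}$ term by term: (i) rewrite the momentum equation \eqref{cc-hybrid-momentum} as $\partial_t\bm+\operatorname{div}(\bu\otimes\bm)=\big(\mu_0^{-1}\nabla\times\bB-a_h(\bK-\m_h\,n\,\bu)\big)\times\bB-\rho\nabla{\sf p}$, the identity $q_h\,n\,\bu-a_h\bK=-a_h(\bK-\m_h\,n\,\bu)$ (from $q_h=a_h\m_h$) producing the effective current; (ii) integrate the kinetic equation over $\bp$ — the magnetic force term is a $\bp$-divergence and drops — to obtain $\partial_tn+\m_h^{-1}\operatorname{div}\bK=0$, hence $-q_h(\partial_tn)\mathbf{A}=a_h(\operatorname{div}\bK)\mathbf{A}$; (iii) insert the advection law for $\mathbf{A}$. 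Substituting into the rescaled transport identity with $\boldsymbol{\mu}=\bN$, the terms $\operatorname{div}(\bu\otimes\bm)$, $q_h\,n\,(\bu\cdot\nabla)\mathbf{A}$ and $q_h\,n\,\nabla\bu\cdot\mathbf{A}$ cancel in pairs against the corresponding pieces of $\operatorname{div}(\bu\otimes\bN)+\nabla\bu\cdot\bN$; the pressure contribution $-\rho\nabla{\sf p}$ and the kinetic-energy contribution $\tfrac12\rho\nabla|\bu|^2$ are gradients and drop; what remains is $\big(\mu_0^{-1}\nabla\times\bB-a_h(\bK-\m_h\,n\,\bu)\big)\times\bB+a_h\big(\operatorname{div}\bK-\m_h\operatorname{div}(n\,\bu)\big)\mathbf{A}$. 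Recognizing $\operatorname{div}\bK-\m_h\operatorname{div}(n\,\bu)=\operatorname{div}(\bK-\m_h\,n\,\bu)$ and rewriting the first term as $-\bB\times(\,\cdot\,)$ then yields exactly the claimed formula, since $\bN/\rho=\bu-q_h(n/\rho)\mathbf{A}$.

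The main obstacle is the bookkeeping in step (iii): one must track the $\mathbf{A}$-advection contributions carefully so that all $\operatorname{div}(\bu\otimes\,\cdot\,)$ and $\nabla\bu\cdot(\,\cdot\,)$ terms cancel, and one must note that the statement presupposes $\mathbf{A}$ evolves by the advection law intrinsic to \eqref{PB-MHD+Vlasov} (a different gauge would modify the $\oint q_h(n/\rho)\mathbf{A}\cdot\de\bq$ term, which is not gauge invariant on its own). More structurally, the same result follows from the Kelvin--Noether circulation theorem for the semidirect-product Lie--Poisson bracket \eqref{PB-MHD+Vlasov}: there $\bN$ is the fluid momentum dual to $\mathfrak{X}(\mathbb{R}^3)$, one checks $\delta H/\delta\bN=\bu$ and $\delta H/\delta\mathbf{A}=\mu_0^{-1}\nabla\times\bB-a_h(\bK-\m_h\,n\,\bu)$, and the non-transport source is $\rho\nabla(\delta H/\delta\rho)$ together with the $\mathbf{A}$-terms built from $\delta H/\delta\mathbf{A}$; Theorem~\ref{theorem1} then translates this identity back into the stated loop integral over $\bu-q_h(n/\rho)\mathbf{A}$.
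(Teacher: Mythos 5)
Your proof is correct, and its main line of attack differs from the paper's in a worthwhile way. The paper proves the corollary entirely at the level of the Hamiltonian structure: invoking Theorem~\ref{theorem1}, it rewrites the Hamiltonian \eqref{Ham-preMHD} in the variables $(\bN,\rho,\mathbf{A},\hat f)$ as \eqref{HamiltonianN}, reads off the abstract Lie--Poisson momentum equation $(\partial_t+\pounds_\bu)\bN=-\rho\nabla(\delta H/\delta\rho)+(\delta H/\delta\bA)\times(\nabla\times\bA)-\bA\,\nabla\cdot(\delta H/\delta\bA)$ from the bracket \eqref{PB-MHD+Vlasov}, computes $\delta H/\delta\bA=\mu_0^{-1}\nabla\times\bB-a_h(\bK-m_h n\bu)$, and then divides by $\rho$ and integrates around the loop --- which is exactly the ``more structural'' route you sketch in your closing paragraph. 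Your primary argument instead verifies the identity directly from the explicit equations \eqref{cc-hybrid-momentum}--\eqref{cc-hybrid-end}, assembling $\partial_t\bN$ from $\partial_t\bm$, the moment equation $\partial_t n+m_h^{-1}\operatorname{div}\bK=0$, and the one-form advection law for $\mathbf{A}$; the bookkeeping you describe (cancellation of the transport terms in the $q_h n\mathbf{A}$ piece, the divergence identity $\operatorname{div}\bK-m_h\operatorname{div}(n\bu)=\operatorname{div}(\bK-m_h n\bu)$, and the gradient terms dropping on a closed loop) all checks out against the paper's computation of $\delta H/\delta\bA$ and \eqref{Nequation}. The direct route buys an independent confirmation that does not presuppose the validity of \eqref{Nequation}, at the cost of more calculation; the paper's route is shorter and makes the geometric origin of the two source terms (the $\times\bB$ force and the $\mathbf{A}\,\nabla\cdot(\cdot)$ term) transparent. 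Your caveat that the statement implicitly fixes the evolution of $\mathbf{A}$ by the Lie-advection law intrinsic to \eqref{PB-MHD+Vlasov} --- since $\oint q_h(n/\rho)\mathbf{A}\cdot\de\bq$ is not gauge invariant on its own --- is a point the paper leaves tacit, and it is well taken.
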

\begin{proof}
From theorem \ref{theorem1}, one concludes that the current-coupling scheme
\eqref{cc-hybrid-momentum}-\eqref{cc-hybrid-end} possesses the
Poisson bracket \eqref{PB-MHD+Vlasov} and the Hamiltonian \eqref{Ham-preMHD} in the form
\begin{multline}\label{HamiltonianN}
H=\frac1{2}\int
\frac1\rho\left|\bN+q_h\,\mathbf{A}\!\int\!\hat{f}\,\de^3\boldsymbol{p}\right|^2\de^3\bq+\frac1{2 m_h}\int\!
\hat{f}\,
\left|\boldsymbol{p}-q_h\,\mathbf{A}\right|^2\de^3\bq\,\de^3\boldsymbol{p}
\\
+\int\! \rho\ \mathcal{U}(\rho)\,\de^3\bq
+\frac1{2\mu_0}\int \left|\nabla\times\mathbf{A}\right|^2\de^3\bq
\,.
\end{multline}
Then, replacing the Hamiltonian \eqref{HamiltonianN} in the bracket
\eqref{PB-MHD+Vlasov} yields the momentum equation
\begin{equation}
\left(\frac{\partial}{\partial t}+\pounds_\bu\right)\bN=-
\rho\,\nabla\frac{\delta H}{\delta \rho} +\frac{\delta H}{\delta
\bA}\times\left(\nabla\times\bA\right)-\bA\,\nabla\cdot\frac{\delta
H}{\delta \bA} \label{Nequation}
\end{equation}
where $\pounds_\bu$ denotes the Lie derivative along the velocity
vector field $\bu$. After dividing by $\rho$ and calculating
\begin{align*}
\frac{\delta H}{\delta
\bA}=&q_h\,\frac{n}{\rho}\left(\bN+q_h\,n\,\bA\right)+\mu_0^{-1}\nabla\times\nabla\times\bA-a_h\int\!\hat{f}\left(\boldsymbol{p}-q_h\,\bA\right)\de^3\boldsymbol{p}
\\
=&\mu_0^{-1}\nabla\times\bB-a_h\left(\bK-m_h\,n\bu\right)\,,
\end{align*}
the proof follows easily by taking the circulation integral on both
sides of \eqref{Nequation}.
\end{proof}\\
Another relevant consequence of theorem \ref{theorem1} is the existence of Casimir functionals
\begin{corollary}[Casimir functionals]
The Poisson bracket \eqref{PB-current-hybridMHD} possesses the following Casimir functional
\[
C(f)=\int\!\Phi(f)\,\dvol\,.
\]
\end{corollary}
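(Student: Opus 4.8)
The plan is to deduce this Casimir property directly from Theorem~\ref{theorem1} together with the well-known Casimir structure of the direct-sum bracket \eqref{PB-MHD+Vlasov}. First I would recall that $\int\Phi(\hat f)\,\de^3\bq\,\de^3\boldsymbol{p}$ is the standard Casimir of the pure Vlasov Lie--Poisson bracket $\int\hat f\{\delta F/\delta\hat f,\delta G/\delta\hat f\}\,\dvol$, since for $F=C(\hat f)$ one has $\delta C/\delta\hat f=\Phi'(\hat f)$, and $\{\hat f,\Phi'(\hat f)\}=\Phi''(\hat f)\{\hat f,\hat f\}=0$; moreover $\delta C/\delta\bN=\delta C/\delta\rho=\delta C/\delta\mathbf{A}=0$, so $C$ has vanishing bracket with everything in \eqref{PB-MHD+Vlasov}. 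The only subtlety is that the Vlasov term in \eqref{PB-current-hybridMHD} is not the bare one: it carries the extra magnetic piece $q_h\,\bB\cdot\partial_\bp(\delta F/\delta f)\times\partial_\bp(\delta G/\delta f)$, and the momentum-coupling terms mix $f$ with $\bm$. So the cleanest route is to transport the statement through the change of variables rather than verify it in the target bracket by brute force.

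The key steps, in order, would be: (i) observe that under the momentum shift of Theorem~\ref{theorem1}, $\bm=\bN+q_h\mathbf{A}\int\hat f\,\de^3\boldsymbol{p}$, the functional $C(f)=\int\Phi(f)\,\dvol$ expressed in the original variables is simply $C=\int\Phi(\hat f)\,\de^3\bq\,\de^3\boldsymbol{p}$, because the transformation $\bp\mapsto\boldsymbol{p}=\bp+q_h\mathbf{A}$ is, at fixed $\bq$, a volume-preserving translation of momentum space, so $\hat f(\bq,\boldsymbol{p})\,\de^3\bq\,\de^3\boldsymbol{p}=f(\bq,\bp)\,\de^3\bq\,\de^3\bp$ and the value of the integral is unchanged; (ii) note that a momentum shift is a Poisson map, so $C$ Poisson-commutes with \emph{every} functional in the bracket \eqref{PB-current-hybridMHD} if and only if the corresponding functional $\hat C(\hat f)=\int\Phi(\hat f)$ Poisson-commutes with everything in \eqref{PB-MHD+Vlasov}; (iii) verify that $\hat C$ is a Casimir of \eqref{PB-MHD+Vlasov}, which is immediate since $\delta\hat C/\delta\bN$, $\delta\hat C/\delta\rho$, and $\delta\hat C/\delta\mathbf{A}$ all vanish, killing the first three lines, while the Vlasov term gives $\int\hat f\{\Phi'(\hat f),\delta G/\delta\hat f\}\,\dvol=-\int\{\hat f,\Phi'(\hat f)\}\,\delta G/\delta\hat f\,\dvol=0$ after integration by parts (or directly, since $\{\hat f,\Phi'(\hat f)\}=0$ pointwise).

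I expect the main obstacle to be purely bookkeeping: making sure that the functional derivatives are computed with respect to the correct variable after the momentum shift. Concretely, $\delta C/\delta\bm$ is \emph{not} obviously zero in the target bracket, because the chain rule $\delta F/\delta f=\delta\bar F/\delta f+\bp\cdot\delta\bar F/\delta\bM$ (and its analogue for the $\bN$--$\bm$ shift) entangles the $f$-derivative with the $\bm$-derivative; one must check that for $C=\int\Phi(f)\,\dvol$ the induced $\bm$-derivative indeed vanishes and the induced $f$-derivative is $\Phi'(f)$, so that the troublesome magnetic term $q_h\bB\cdot\partial_\bp\Phi'(f)\times\partial_\bp(\delta G/\delta f)$ is killed by $\partial_\bp\Phi'(f)=\Phi''(f)\,\partial_\bp f$ paired against itself — but here it is paired against $\partial_\bp(\delta G/\delta f)$, so one needs the antisymmetrized cancellation built into the bracket, or better, one sidesteps it entirely by working in the $(\bN,\rho,\mathbf{A},\hat f)$ picture where no such term appears. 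Hence the recommended presentation is the three-line argument via Theorem~\ref{theorem1}, relegating the direct verification to a remark. $\blacksquare$
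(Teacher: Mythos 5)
Your proposal is correct and follows essentially the same route as the paper: the paper's proof likewise invokes the fact that $C(\hat f)=\int\Lambda(\hat f)\,\de^3\bq\,\de^3\boldsymbol{p}$ is a Casimir of the direct-sum bracket \eqref{PB-MHD+Vlasov} and transfers it through the change of variables of Theorem~\ref{theorem1}, noting (as you do) that direct verification in the target bracket is also possible but unnecessary. Your added observations --- that the momentum translation $\bp\mapsto\bp+q_h\mathbf{A}$ preserves the value of the integral so that $C$ has no induced $\mathbf{A}$- or $\bm$-dependence, and that the Poisson-map property is what carries the Casimir across --- are exactly the details the paper leaves implicit.
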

\begin{proof}
Although it can be proven by direct verification, this is a direct consequence of the Casimir functional $C(\hat{f})=\int\Lambda(\hat{f})\,\de^3\bq\,\de^3\boldsymbol{p}$ for the Poisson bracket \eqref{PB-MHD+Vlasov}, which is inherited from the general properties of the Hamiltonian structure of the Vlasov equation.
\end{proof}

\smallskip
\noindent
Notice that the magnetic helicity $\mathcal{H}=\int\mathbf{A}\cdot\nabla\times\mathbf{A}\ \de^3\bq$ is also a Casimir for the bracket \eqref{PB-MHD+Vlasov}, although it is not strictly a Casimir for the bracket \eqref{PB-current-hybridMHD}, which is rather expressed in terms of the magnetic induction $\bB=\nabla\times\mathbf{A}$.

\medskip

Analogous properties as those above also hold for the pressure coupling scheme. This can be seen by the following relation between the Poisson brackets \eqref{PB-pressure-hybridMHD} and \eqref{PB-MHD+Vlasov}:
 \begin{theorem}\label{theorem2}
The Poisson bracket \eqref{PB-MHD+Vlasov} transforms to the bracket \eqref{PB-pressure-hybridMHD} under the change of variable
\begin{align*}
\bM=&\bN+\int\!\boldsymbol{p}\,\hat{f}(\bq,\boldsymbol{p})\,\de^3\boldsymbol{p}=\bN+\int\!\bp\,{f}(\bq,\bp)\,\de^3\bp+q_h\mathbf{A}\int{f}(\bq,\bp)\,\de^3\bp
\,.
\end{align*}
\end{theorem}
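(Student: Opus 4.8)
I would prove Theorem~\ref{theorem2} by recognizing that the change of variable in its statement factors through the transformation already established in Theorem~\ref{theorem1}. Writing the mechanical momentum as $\boldsymbol{p}=\bp+q_h\mathbf{A}$ and using that the fibre translation $\boldsymbol{p}\mapsto\bp$ preserves the momentum-space volume --- so that $\int\!\hat f\,\de^3\boldsymbol{p}=\int\!f\,\de^3\bp$ and $\int\!\boldsymbol{p}\,\hat f\,\de^3\boldsymbol{p}=\int\!\bp\,f\,\de^3\bp+q_h\mathbf{A}\int\!f\,\de^3\bp$ --- the map $\bM=\bN+\int\!\boldsymbol{p}\,\hat f\,\de^3\boldsymbol{p}$ is exactly the composition of the momentum shift $\bm=\bN+q_h\mathbf{A}\int\!f\,\de^3\bp$ (together with the accompanying phase-space gauge transformation $\hat f(\bq,\boldsymbol{p})=f(\bq,\boldsymbol{p}-q_h\mathbf{A})$), which by Theorem~\ref{theorem1} sends \eqref{PB-MHD+Vlasov} to \eqref{PB-current-hybridMHD}, followed by the further shift $\bM=\bm+\int\!\bp\,f\,\de^3\bp$, which was shown in the main text to send \eqref{PB-current-hybridMHD} to \eqref{PB-pressure-hybridMHD}. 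Since a change of variables transports a Poisson bracket to a Poisson bracket, with the Jacobi identity inherited automatically, composing the two stages proves the claim; the only point worth spelling out is the volume-preservation identity above, which guarantees that the two successive shifts really do add up to the single shift in the statement.

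Alternatively, mirroring the computation of Appendix~\ref{appendix1}, one can substitute directly. Relating $\bar F(\bM,\rho,\mathbf{A},f)$ to $F(\bN,\rho,\mathbf{A},\hat f)$ through $\bN=\bM-\int\!\bp\,f\,\de^3\bp-q_h\mathbf{A}\int\!f\,\de^3\bp$ and $\hat f(\bq,\boldsymbol{p})=f(\bq,\boldsymbol{p}-q_h\mathbf{A})$, the chain rule gives
\[
\frac{\delta F}{\delta\bN}=\frac{\delta\bar F}{\delta\bM}\,,\qquad
\frac{\delta F}{\delta\rho}=\frac{\delta\bar F}{\delta\rho}\,,\qquad
\frac{\delta F}{\delta\hat f}=\frac{\delta\bar F}{\delta f}+\boldsymbol{p}\cdot\frac{\delta\bar F}{\delta\bM}\,,
\]
while the variation with respect to $\mathbf{A}$ picks up the usual minimal-coupling corrections coming from the implicit $\mathbf{A}$-dependence of $\hat f$ and of the term $q_h\mathbf{A}\int\!f\,\de^3\bp$ inside $\bN$. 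Inserting these into \eqref{PB-MHD+Vlasov}, integrating by parts in $\bp$ where needed, splitting $\boldsymbol{p}=\bp+q_h\mathbf{A}$ in the coupling terms, and using $\operatorname{div}\bB=0$ together with the standard identity for the curl of a cross product to recombine the $\mathbf{A}$-dependent pieces into the $\bB$-dependent magnetic part, one recovers \eqref{PB-pressure-hybridMHD} term by term.

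The main obstacle, in either route, is the bookkeeping of the $\mathbf{A}$-variation: the transformation simultaneously reshuffles the momentum density $\bN$, the vector potential $\mathbf{A}$, and the momentum coordinate on which the distribution depends, so $\delta F/\delta\mathbf{A}$ breaks into several contributions and one has to track precisely where the $q_h\mathbf{A}$- and $\partial_{\boldsymbol{p}}\hat f$-terms land. This bookkeeping is exactly what produces the kinetic--fluid coupling terms $\{f,\bp\cdot\delta G/\delta\bM\}$ and the magnetic curl terms appearing in \eqref{PB-pressure-hybridMHD}, and getting all the signs to match is where the bulk of the verification lies. For that reason I would present the composition argument as the main proof: it confines the delicate computation to Theorem~\ref{theorem1}, which is already proved in the appendix, leaving only the elementary momentum shift $\bM=\bm+\int\!\bp\,f\,\de^3\bp$ already carried out in the text.
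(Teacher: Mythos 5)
Your proposal is correct, but your primary route is genuinely different from the paper's. The paper's own proof (Appendix~\ref{appendix2}) performs the single shift $\bM=\bN+\int\boldsymbol{p}\,\hat{f}\,\de^3\boldsymbol{p}$ directly on the bracket \eqref{PB-MHD+Vlasov}: it first rewrites the bracket in the hatted variables, then converts to $(f,\bp)$ via the minimal-coupling relations for $\{\hat h,\hat k\}$ and $\delta F/\delta\mathbf{A}$, and finally passes to $\bB$ --- i.e.\ it is exactly your second, ``direct substitution'' route, and it never passes through the current-coupling bracket \eqref{PB-current-hybridMHD}. Your composition argument is a legitimate alternative: the identity $\int\boldsymbol{p}\,\hat f\,\de^3\boldsymbol{p}=\int\bp\,f\,\de^3\bp+q_h\mathbf{A}\int f\,\de^3\bp$ does show that the map in Theorem~\ref{theorem2} is the Theorem~\ref{theorem1} shift followed by $\bM=\bm+\int\bp\,f\,\de^3\bp$, and composing changes of variables indeed transports Poisson brackets. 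What this buys is that all of the delicate $\mathbf{A}$-bookkeeping is quarantined inside the already-proven Theorem~\ref{theorem1}. The one caveat is that the remaining step --- that the purely kinematic shift by $\bK=\int\bp\,f\,\de^3\bp$ carries \eqref{PB-current-hybridMHD} into \eqref{PB-pressure-hybridMHD} --- is only \emph{asserted} in the main text (``can be easily substituted''), not displayed; a self-contained proof along your lines should verify it, the only nontrivial point being that the three copies of $q_h f\,\bB\cdot\bigl(\delta F/\delta\bM\times\delta G/\delta\bM\bigr)$ generated by substituting $\partial_\bp(\delta F/\delta f)\mapsto\partial_\bp(\delta \bar F/\delta f)+\delta\bar F/\delta\bM$ into the third and fourth lines of \eqref{PB-current-hybridMHD} cancel exactly (the signs work out to $1-1-1+1=0$), while the $f\{\bp\cdot\delta\bar F/\delta\bM,\bp\cdot\delta\bar G/\delta\bM\}$ term upgrades $\bm\cdot[\cdot,\cdot]$ to $\bM\cdot[\cdot,\cdot]$. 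With that cancellation checked, your argument is complete and arguably cleaner than the appendix computation.
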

The proof of the above theorem is presented in appendix \ref{appendix2}. Again, the above statement yields the 
\begin{corollary}[Kelvin's theorems for pressure-coupling schemes]
The Kelvin circulation laws for the pressure coupling schemes \eqref{hybridMHD1}-\eqref{hybridMHD3} and \eqref{hybrid-MHD-pc2-velocity}-\eqref{hybrid-MHD-pc2-end} read respectively as
\begin{multline*}
\frac{\de}{\de
t}\oint_{\gamma_t(\bU)}\left(\bU-\frac1\rho\int\!\boldsymbol{p}\,\hat{f}(\bq,\boldsymbol{p})\,\de^3\boldsymbol{p}\right)\cdot\de\bq=
-\,\oint_{\gamma_t(\bU)}\frac1\rho\,
\bB\times\Big(\mu_0^{-1}\nabla\times\bB-a_h\bK\Big)\cdot\de\bq
\\
+a_h\oint_{\gamma_t(\bU)}\frac1\rho\,(\nabla\cdot\bK)\,\mathbf{A}\cdot\de\bq\,,
\end{multline*}
\begin{equation*}
\frac{\de}{\de
t}\oint_{\gamma_t(\bU)}\left(\bU-\frac1\rho\int\!\boldsymbol{p}\,\hat{f}(\bq,\boldsymbol{p})\,\de^3\boldsymbol{p}\right)\cdot\de\bq=
-\,{\mu_0^{-1}}\oint_{\gamma_t(\bU)}\, \rho^{-1}\,
\bB\times\nabla\times\bB\cdot\de\bq
\,,\hspace{2.4cm}
\end{equation*}
where $\bK=\int \bp\,f\,\de^3\bp$ and the loop $\gamma_t(\bU)$ moves with the velocity
$\bU=\bM/\rho$.
\end{corollary}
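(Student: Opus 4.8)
The plan is to obtain both circulation laws from Theorem~\ref{theorem2} exactly as Corollary~\ref{corollary1} was obtained from Theorem~\ref{theorem1}. By Theorem~\ref{theorem2}, each of the two pressure-coupling schemes carries the direct-sum bracket \eqref{PB-MHD+Vlasov} in the variables $(\bN,\rho,\mathbf{A},\hat{f})$, with $\bM=\bN+\int\!\boldsymbol{p}\,\hat{f}(\bq,\boldsymbol{p})\,\de^3\boldsymbol{p}$, the relevant Hamiltonian being \eqref{PB-pressure-hybridMHD-Hamiltonian} (respectively \eqref{hybrid-MHD-pc2-Hamiltonian}) after the substitutions $f(\bq,\bp)=\hat{f}(\bq,\boldsymbol{p})$ with $\bp=\boldsymbol{p}-q_h\mathbf{A}$ and $\bB=\nabla\times\mathbf{A}$. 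Since $\bN$, $\rho$ and $\hat{f}$ are independent fields, $\delta H/\delta\bN=\bM/\rho=\bU$, so Hamilton's equations from \eqref{PB-MHD+Vlasov} yield for $\bN$ the semidirect-product MHD momentum equation
\[
\left(\frac{\partial}{\partial t}+\pounds_{\bU}\right)\bN=-\rho\,\nabla\frac{\delta H}{\delta\rho}+\frac{\delta H}{\delta\mathbf{A}}\times\bB-\mathbf{A}\,\nabla\cdot\frac{\delta H}{\delta\mathbf{A}}\,,
\]
exactly as in \eqref{Nequation}, together with $\partial_t\rho+\operatorname{div}(\rho\,\bU)=0$; the Vlasov term of \eqref{PB-MHD+Vlasov} does not contribute to the $\bN$-equation.

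Dividing by $\rho$ and using continuity, the one-form $\rho^{-1}\bN\cdot\de\bq=\big(\bU-\rho^{-1}\!\int\!\boldsymbol{p}\,\hat{f}\,\de^3\boldsymbol{p}\big)\cdot\de\bq$ satisfies $(\partial_t+\pounds_{\bU})(\rho^{-1}\bN\cdot\de\bq)=\rho^{-1}(\text{r.h.s.})\cdot\de\bq$, so integrating over a loop $\gamma_t(\bU)$ moving with velocity $\bU$ and discarding $\nabla(\delta H/\delta\rho)\cdot\de\bq$ (the gradient contribution, which has zero circulation) gives
\[
\frac{\de}{\de t}\oint_{\gamma_t(\bU)}\!\Big(\bU-\frac1\rho\int\!\boldsymbol{p}\,\hat{f}\,\de^3\boldsymbol{p}\Big)\cdot\de\bq=\oint_{\gamma_t(\bU)}\frac1\rho\Big(\frac{\delta H}{\delta\mathbf{A}}\times\bB-\mathbf{A}\,\nabla\cdot\frac{\delta H}{\delta\mathbf{A}}\Big)\cdot\de\bq\,.
\]
Everything then reduces to computing $\delta H/\delta\mathbf{A}$. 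For the first scheme only the magnetic term and $\frac1{2m_h}\int\hat{f}\,|\boldsymbol{p}-q_h\mathbf{A}|^2$ depend on $\mathbf{A}$, giving $\delta H/\delta\mathbf{A}=\mu_0^{-1}\nabla\times\bB-a_h\bK$ with $\bK=\int\!\bp\,f\,\de^3\bp$; since $\nabla\cdot\nabla\times\bB=0$ one has $\nabla\cdot(\delta H/\delta\mathbf{A})=-a_h\nabla\cdot\bK$, and writing $(\mu_0^{-1}\nabla\times\bB-a_h\bK)\times\bB=-\bB\times(\mu_0^{-1}\nabla\times\bB-a_h\bK)$ reproduces the first displayed circulation law, the $a_h(\nabla\cdot\bK)\mathbf{A}$ term arising from $-\mathbf{A}\,\nabla\cdot(\delta H/\delta\mathbf{A})$. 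For the second scheme the relative kinetic term is $\mathbf{A}$-\emph{independent}: because $\bK/n=\rho^{-1}\!\int\!\bp\,f\,\de^3\bp$ carries the same shift $-q_h\mathbf{A}$ as $\bp$, one finds $\bp-\bK/n=\boldsymbol{p}-n^{-1}\!\int\!\boldsymbol{p}\,\hat{f}\,\de^3\boldsymbol{p}$, free of $\mathbf{A}$; hence $\delta H/\delta\mathbf{A}=\mu_0^{-1}\nabla\times\bB$, the term $\mathbf{A}\,\nabla\cdot(\delta H/\delta\mathbf{A})$ vanishes, and only $-\mu_0^{-1}\bB\times\nabla\times\bB$ remains, which is the second displayed law.

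The steps that lean on earlier material are immediate: the only structural input is the explicit $\bN$-equation attached to \eqref{PB-MHD+Vlasov}, identical to the one used in the proof of Corollary~\ref{corollary1}, so no new check of the Jacobi identity or of the transformation theorems is required. The main obstacle --- indeed essentially the only point carrying content --- will be the bookkeeping of $\delta H/\delta\mathbf{A}$, and especially the observation that for the second scheme the shift $\bp\mapsto\boldsymbol{p}-q_h\mathbf{A}$ cancels inside $|\bp-\bK/n|^2$; this cancellation is precisely what removes the $a_h$ contributions from that scheme's circulation law, while the rest (signs in the cross products, discarding exact one-forms under $\oint$) is routine.
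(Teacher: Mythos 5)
Your proof is correct and follows essentially the same route as the paper: pass to the variable $\bN$ via Theorem~\ref{theorem2}, read off the momentum equation \eqref{Nequation} from the direct-sum bracket \eqref{PB-MHD+Vlasov}, and reduce everything to computing $\delta H/\delta\mathbf{A}$ for each Hamiltonian (the paper gets the vanishing of the $a_h$ contributions in the second scheme by expanding \eqref{HamiltonianPCS2-N} term by term, while you observe directly that the shift $-q_h\mathbf{A}$ cancels inside $\left|\bp-\bK/n\right|^2$ --- the same computation, organized slightly more cleanly). The only blemish is the typo ``$\bK/n=\rho^{-1}\!\int\!\bp\,f\,\de^3\bp$'', which should read $n^{-1}$ in place of $\rho^{-1}$; your subsequent formula $\bp-\bK/n=\boldsymbol{p}-n^{-1}\!\int\!\boldsymbol{p}\,\hat{f}\,\de^3\boldsymbol{p}$ is nevertheless correct.
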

\begin{proof}
Because of theorem \ref{theorem2}, the pressure coupling scheme \eqref{hybridMHD1}-\eqref{hybridMHD3} can be written in terms of the momentum variable $\bN=\bM-\int\!\boldsymbol{p}\,\hat{f}(\bq,\boldsymbol{p})\,\de^3\boldsymbol{p}$. Indeed, the equation for $\bN$ is given immediately by the Poisson bracket \eqref{PB-MHD+Vlasov} together with the Hamiltonian \eqref{PB-pressure-hybridMHD-Hamiltonian} in the form
\begin{multline}\label{HamiltonianPCS-N}
H=\frac1{2}\int
\frac1\rho\left|\bN+\int\!\boldsymbol{p}\,\hat{f}(\bq,\boldsymbol{p})\,\de^3\boldsymbol{p}\right|^2\de^3\bq+\frac1{2\m_h}\int\!
\hat{f}\,
\left|\boldsymbol{p}-q_h\,\mathbf{A}\right|^2\de^3\bq\,\de^3\boldsymbol{p}
\\
+\int\! \rho\ \mathcal{U}(\rho)\,\de^3\bq
+\frac1{2\mu_0}\int \left|\nabla\times\mathbf{A}\right|^2\de^3\bq
\,.
\end{multline}
At this point, the Poisson bracket \eqref{PB-MHD+Vlasov} produces the momentum equation \eqref{Nequation}, while the Hamiltonian \eqref{HamiltonianPCS-N} yields
\begin{equation*}
\frac{\delta H}{\delta
\bA}= {\mu_0}^{-1}\,\nabla\times\bB-a_h\,\bK\,.
\end{equation*}
Then, after replacing the above variational derivative in equation \eqref{Nequation}, the remaining steps coincide with those already followed in the proof of corollary \ref{corollary1}.

\noindent
The same arguments also hold for the second pressure-coupling scheme  \eqref{hybrid-MHD-pc2-velocity}-\eqref{hybrid-MHD-pc2-end}. In this case, the Hamiltonian \eqref{hybrid-MHD-pc2-Hamiltonian} is written explicitly in terms of $\bN$ as
\begin{multline}\label{HamiltonianPCS2-N}
H=\frac1{2}\int
\frac1\rho\left|\bN+\int\!\boldsymbol{p}\,\hat{f}(\bq,\boldsymbol{p})\,\de^3\boldsymbol{p}\right|^2\de^3\bq+\frac1{2\m_h}\int\!
\hat{f}\,
\left|\boldsymbol{p}-q_h\,\mathbf{A}\right|^2\de^3\bq\,\de^3\boldsymbol{p}
\\
-\frac1{2m_h}\int\frac{1}{\int\!\hat{f}(\bq,\boldsymbol{p})\,\de^3\boldsymbol{p}}\,\left|\int\!\boldsymbol{p}\,\hat{f}(\bq,\boldsymbol{p})\,\de^3\boldsymbol{p}-q_h\bA\int\!\hat{f}(\bq,\boldsymbol{p})\,\de^3\boldsymbol{p}\right|^2\de^3\bq
\\
+\int\! \rho\ \mathcal{U}(\rho)\,\de^3\bq
+\frac1{2\mu_0}\int \left|\nabla\times\mathbf{A}\right|^2\de^3\bq
\,,
\end{multline}
which in turn yields the following variational derivative:
\begin{equation*}
\frac{\delta H}{\delta
\bA}= {\mu_0}^{-1}\,\nabla\times\bB\,.
\end{equation*}
Then, inserting the following relation in the equation \eqref{Nequation} arising from the Poisson bracket \eqref{PB-MHD+Vlasov} and following the same steps as in the previous cases completes the derivation of the circulation law for the second pressure-coupling scheme  \eqref{hybrid-MHD-pc2-velocity}-\eqref{hybrid-MHD-pc2-end}.
\end{proof}

\noindent
Notice that if the hot particle density $n$ is negligible at all times, then the above circulation law for the first pressure-coupling scheme \eqref{hybridMHD1}-\eqref{hybridMHD3} reduces to the circulation law given in corollary \ref{corollary1} for the current coupling scheme. However, this particular case presents some obstacles that will be discussed in the next section.

It is easy to
verify that functionals of the type $\int\Phi(f)\,\dvol$ are also
Casimirs of the Poisson bracket \eqref{PB-pressure-hybridMHD}. The
identification of Casimir functionals is a fundamental feature of
the Hamiltonian approach, which allows for a nonlinear stability
analysis as extensively described in \cite{HoMaRaWe}. The study of
the stability properties of the hybrid MHD schemes is a considerable
part of future work plans.

\smallskip
\noindent Thus, we have the following consequence
\begin{corollary}
The brackets \eqref{PB-current-hybridMHD} and \eqref{PB-pressure-hybridMHD} are Poisson brackets, whose Jacobi identity is inherited from the direct sum Poisson bracket \eqref{PB-MHD+Vlasov}.
\end{corollary}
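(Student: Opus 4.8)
The plan is to recognise both hybrid brackets as the images of the manifestly Poisson bracket \eqref{PB-MHD+Vlasov} under the explicit substitutions already recorded in Theorems \ref{theorem1} and \ref{theorem2}, and then to invoke the elementary fact that the Jacobi identity survives a relabelling of the dynamical variables. A direct route — verifying the Jacobi identity for \eqref{PB-current-hybridMHD} and \eqref{PB-pressure-hybridMHD} term by term from their explicit forms — is in principle available, but it would be long and unilluminating, which is precisely the reason for working through the direct-sum bracket. So first I would recall, as in the remark following \eqref{PB-MHD+Vlasov}, that this bracket is Lie--Poisson on the direct-sum Lie algebra displayed there; being Lie--Poisson, it satisfies antisymmetry, the Leibniz rule and the Jacobi identity without further ado. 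For \eqref{PB-current-hybridMHD} and \eqref{PB-pressure-hybridMHD} the antisymmetry and the Leibniz property can be read off at once from their explicit forms, so the only substantive claim left is the Jacobi identity.

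Write $\{\cdot,\cdot\}_0$ for \eqref{PB-MHD+Vlasov}. Theorem \ref{theorem1} states that, after the substitution $\bm=\bN+q_h\bA\int\hat f\,\de^3\boldsymbol p$, $\hat f(\bq,\bp+q_h\bA)=f(\bq,\bp)$, $\bB=\nabla\times\bA$ — together with the induced chain rule for the variational derivatives carried out in appendix \ref{appendix1} — the bracket $\{\cdot,\cdot\}_0$ becomes exactly \eqref{PB-current-hybridMHD}; denote the latter $\{\cdot,\cdot\}_\star$. Hence, given functionals $F,G,H$ of $(\bm,\rho,\bB,f)$ and writing $\hat F,\hat G,\hat H$ for their expressions in the variables $(\bN,\rho,\bA,\hat f)$, one has $\{F,G\}_\star=\{\hat F,\hat G\}_0$ under the change of variables, and therefore
\[
\{\{F,G\}_\star,H\}_\star+\text{cyc.}=\{\{\hat F,\hat G\}_0,\hat H\}_0+\text{cyc.}=0 .
\]
The same reasoning applied to Theorem \ref{theorem2} — alternatively, composing Theorem \ref{theorem1} with the further, manifestly invertible momentum shift $\bm=\bM-\int\bp\,f\,\de^3\bp$ and the accompanying chain rule $\delta F/\delta\bm=\delta\bar F/\delta\bM$, $\delta F/\delta f=\delta\bar F/\delta f+\bp\cdot\delta\bar F/\delta\bM$ — gives the Jacobi identity for \eqref{PB-pressure-hybridMHD}.

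The one point that needs care — and, I expect, the main obstacle — is that $\bA$ enters \eqref{PB-current-hybridMHD} and \eqref{PB-pressure-hybridMHD} only through $\bB=\nabla\times\bA$, so the maps in Theorems \ref{theorem1}--\ref{theorem2} are not literally invertible and the hybrid brackets actually live on the quotient obtained by forgetting the gauge part of $\bA$. What has to be justified is that the functionals depending on $\bA$ solely through $\nabla\times\bA$ form a Poisson subalgebra of $\{\cdot,\cdot\}_0$: using $\langle\delta F/\delta\bA,\delta\bA\rangle=\langle\delta F/\delta\bB,\nabla\times\delta\bA\rangle$ and integrating by parts converts the $\bA$-terms of \eqref{PB-MHD+Vlasov} into the $\operatorname{curl}$-terms of \eqref{PB-current-hybridMHD}, and the outcome again depends on $\bA$ only through $\bB$; closure on this subalgebra is exactly what lets the Jacobi identity descend. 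That the reduction is genuine — and hence that this step is not vacuous — is illustrated by the fact that the magnetic helicity $\mathcal H=\int\bA\cdot\nabla\times\bA\,\de^3\bq$ is a Casimir of \eqref{PB-MHD+Vlasov} but not of \eqref{PB-current-hybridMHD}. Thus the remaining work is the term-by-term verification that the pushforward of \eqref{PB-MHD+Vlasov} under the full substitution coincides with the stated hybrid bracket, with no leftover terms that would break closure — precisely the computations of appendices \ref{appendix1} and \ref{appendix2} — after which the Jacobi identity for \eqref{PB-current-hybridMHD} and \eqref{PB-pressure-hybridMHD} comes for free.
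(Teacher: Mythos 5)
Your proposal is correct and follows essentially the same route as the paper: the corollary is stated there as an immediate consequence of Theorems \ref{theorem1} and \ref{theorem2}, whose appendix computations exhibit the hybrid brackets as pushforwards of the Lie--Poisson direct-sum bracket \eqref{PB-MHD+Vlasov}, so the Jacobi identity is inherited through the change of variables. Your extra observation that the passage from $\mathbf{A}$ to $\bB=\nabla\times\mathbf{A}$ requires checking that functionals of $\bB$ alone form a Poisson subalgebra is a point the paper leaves implicit (it only remarks on it indirectly via the magnetic helicity), and is a welcome refinement rather than a different argument.
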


\begin{remark}[Lie algebra of pressure-coupling schemes]
From an algebraic point of view, the Poisson bracket
\eqref{PB-pressure-hybridMHD}  is a Lie Poisson bracket
\cite{MaWeRa} on the semidirect-product Lie algebra
\[
\mathfrak{X}(\mathbb{R}^3)\,\circledS\left(C^\infty(\mathbb{R}^3) \times\Omega^1(\mathbb{R}^3)\times C^\infty(\mathbb{R}^6)\right)
\]
where we have denoted by $\mathfrak{X}(\mathbb{R}^3)$ the space of
velocity vector fields, while $\Omega^1(\mathbb{R}^3)$ denotes the
space of differential one-forms (i.e. the space of magnetic vector
potentials). Notice that the space $C^\infty(\mathbb{R}^6)$ is a
Poisson algebra, which is dual to the  Poisson manifold of Vlasov
distributions. This construction containing the semidirect-product
Lie algebra
$\mathfrak{X}(\mathbb{R}^3)\,\circledS\,C^\infty(\mathbb{R}^6)$
appears in a continuum model for the first time.
\end{remark}

\subsection{Arguments in favor of the current-coupling scheme}
The current coupling scheme for hybrid MHD is often considered to be
 equivalent to ordinary pressure coupling schemes
\cite{ParkEtAl,PaBeFuTaStSu,FuPark}. However, the assumption of a
rarefied energetic ion plasma, poses nontrivial problems in the
mathematical formulation. Indeed, we recognize that the assumption
of negligible particle density $n$ may lead to some ambiguities. For
example, we saw that in the second pressure-coupling scheme
\eqref{hybrid-MHD-pc2-velocity}-\eqref{hybrid-MHD-pc2-end}, the
condition $n=0$ is preserved by the dynamics, although this reflects
in a possible divergence of the Hamiltonian, which indeed contains
the term $n^{-1}\left|\bK\right|^2$. Then, the only possibility to avoid this
divergence is to set $\bK=0$. For example, the whole velocity
$\bK/n$ is sometimes neglected completely \cite{TaBrKi}. However,
the well known pressure term
$\operatorname{div}\!\int\left(\bp-\boldsymbol{V}\right)^{\otimes\,2\!}f\,\de^3\bp$
appearing in the equation for $\bK$ forbids this possibility unless
one accepts that this pressure term is also negligible, which
contradicts the starting hypothesis of non-negligible pressure
effects in the total momentum dynamics. This non-negligible pressure
term is a continuous source of hot particle averaged momentum, whose
initial low levels are not maintained by the dynamics. Notice that
this phenomenon is not peculiar of the pressure-coupling schemes
presented in this paper; rather, this problem is intrinsically
unavoidable and it is  present in all pressure-coupling schemes.

The current-coupling scheme provides an exceptional way of avoiding
this sort of problems, since it makes no assumption on the moments
of the Vlasov distribution. Some advantages of the current-coupling
scheme were also emphasized in \cite{BeDeCh}. Moreover, the
existence of the newly discovered Hamiltonian structure for this
model, together with its consequent circulation theorem and Casimir functions, enriches
this model significantly. In conclusion, the
pressure coupling scheme is very different from the current coupling
scheme and the latter solves all the significant problems emerging from
pressure coupling schemes.

\section{Hybrid kinetic-fluid formulation of Hall MHD}
In the preceding sections we have investigated the hybrid
kinetic-fluid formulation of MHD models taking into account the
dynamics of energetic ions. On the other hand, the existence of
energetic ions is also a peculiar feature of Hall effects in
magnetized plasmas. Indeed, one of the main consequences of Hall
effects is the progressive acceleration of ions that become
particularly fast. In Hall MHD models, this scenario leads to the
simple idea of treating the whole ion dynamics by a kinetic
approach, while the electrons are still considered as a neutralizing
background. This model is usually referred to as
``kinetic-particle-ion/fluid electron'' hybrid scheme
\cite{PaBeFuTaStSu,WiYiOmKaQu,ByCoCoHa,YiWiEtAl}. Its equations are easily
derived from the kinetic-multifluid system of section
\ref{kinetic-mf-system}, upon letting the only label $s=2$ identify
the electron species. Then, the plasma equations of motion reduce to
the Vlasov kinetic equation \eqref{multi-fluid-Vlasov} and the
electron fluid equation \eqref{multi-fluid-momentum} (with $s=2$).
Neglecting electron inertia (that is  letting $m_e\to0$) and
assuming neutrality (that is letting $\epsilon_0\to0$) yields the
well known equations \cite{PaBeFuTaStSu,WiYiOmKaQu,ByCoCoHa}
\begin{align}\label{kinetic-HMHD1}
&\frac{\partial f}{\partial t}+\frac{\bp}{m_i}\cdot\frac{\partial f}{\partial \bq}+a_i{\bp}\times\bB\cdot\frac{\partial f}{\partial \bp}
+\frac{q_i}{a_e\rho_e}\left[a_e\nabla\mathsf{p}_e-\left(a_i\!\int\!\bp\,f\,\de^3\bp-\mu_0^{-1}\nabla\times\bB\right)\times\bB\right]\cdot\frac{\partial f}{\partial \bp}=0
\\\label{kinetic-HMHD2}
&
\frac{\partial \bB}{\partial t}=\nabla\times\left[\frac1{a_e\rho_e}\,\bB\times\!\left(a_i\!\int\!\bp\,f\,\de^3\bp-\mu_0^{-1}\nabla\times\bB\right)\right]
\,,
\end{align}
In the search for a Hamiltonian structure of the above equations, it
is helpful to recall the Hamiltonian structure of Hall MHD, which
was presented in \cite{Holm3}.  Proceeding by analogy to the
Hall-MHD bracket in \cite{Holm3}, one introduces the
Poisson bracket
\begin{align}\nonumber
\{F,G\}=&\int\left(\mathsf{Q}_e^{-1}\,\nabla\times\mathbf{A}\right)\cdot\frac{\delta F}{\delta\mathbf{A}}\times\frac{\delta G}{\delta\mathbf{A}}\ \de^3\bq
-\int\left(\frac{\delta F}{\delta\mathbf{A}}\cdot\nabla\frac{\delta G}{\delta\mathsf{Q}_e}-\frac{\delta G}{\delta\mathbf{A}}\cdot\nabla\frac{\delta F}{\delta\mathsf{Q}_e}\right)\de^3\bq
\\\label{kinetic-HMHD-PB1}
&
+\int \hat{f}\left\{\frac{\delta
F}{\delta \hat{f}},\frac{\delta G}{\delta \hat{f}}\right\}\de^3\bq\,\de^3\boldsymbol{p}
\,,
\end{align}
where $\mathsf{Q}_e=a_e\rho_e$ is the electron charge density and
the kinetic Vlasov term in the last line replaces the fluid bracket
terms appearing in the Hamiltonian structure of Hall MHD
\cite{Holm3}. Notice that the first line of \eqref{kinetic-HMHD-PB1} is equivalent to a fluid bracket in the
density-momentum variables
$\left(\mathsf{Q}_e,\mathsf{Q}_{e\,}\mathbf{A}\right)$ \cite{Holm3}. At this
point, the Hamiltonian structure of the equations
\eqref{kinetic-HMHD1}-\eqref{kinetic-HMHD2} arises by simply
expressing the above bracket in terms of the Vlasov distribution
$f(\bq,\bp)=f(\bq,\boldsymbol{p}-q_i\mathbf{A})=\hat{f}(\bq,\boldsymbol{p})$.
Then, the final result is
\begin{align}\nonumber
\{F,G\}=&\int\left(\mathsf{Q}_e^{-1}\,\nabla\times\mathbf{A}\right)\cdot\frac{\delta F}{\delta\mathbf{A}}\times\frac{\delta G}{\delta\mathbf{A}}\,\de^3\bq
-\int\left(\frac{\delta F}{\delta\mathbf{A}}\cdot\nabla\frac{\delta G}{\delta\mathsf{Q}_e}-\frac{\delta G}{\delta\mathbf{A}}\cdot\nabla\frac{\delta F}{\delta\mathsf{Q}_e}\right)\de^3\bq
\\\nonumber
&
+
\int f\left(\left\{\frac{\delta
F}{\delta f},\frac{\delta G}{\delta f}\right\}
+q_i\,
\nabla\times\mathbf{A}\cdot\frac{\partial}{\partial \bp}\frac{\delta
F}{\delta f}
\times
\frac{\partial}{\partial \bp}\frac{\delta
G}{\delta f}
\right)\dvol
\\\nonumber
&-
q_i\int\! f\left(\left\{\frac{\delta F}{\delta f},\frac{\delta G}{\delta \mathsf{Q}_e}\right\}-\left\{\frac{\delta G}{\delta f},\frac{\delta F}{\delta \mathsf{Q}_e}\right\}\right)\dvol
\\\nonumber
&+\int\!\mathsf{Q}_e^{-1}f\,\nabla\times\mathbf{A}\cdot\!\left(\frac{\partial}{\partial\bp}\frac{\delta G}{\delta f}\times\frac{\delta F}{\delta \mathbf{A}}-\frac{\partial}{\partial\bp}\frac{\delta F}{\delta f}\times\frac{\delta G}{\delta \mathbf{A}}\right)\dvol
\\
&+
q_i^2\int\!\mathsf{Q}_e^{-1}\,\nabla\times\mathbf{A}\cdot\!\left(\int\!f\,\frac{\partial}{\partial \bp}\frac{\delta F}{\delta f}\,\de^3\bp\right)\times\left(\int\!f\,\frac{\partial}{\partial \bp'}\frac{\delta G}{\delta f}\,\de^3\bp'\right)\de^3\bq
\,,
\label{kinetic-HMHD-PB2}
\end{align}
which is indeed the Poisson bracket of equations \eqref{kinetic-HMHD1}-\eqref{kinetic-HMHD2} and it is accompanied by the Hamiltonian
\begin{equation}
H(f,\mathbf{A},\mathsf{Q}_e)=\frac1{2\m_i}\int\! f \left|\bp\right|^2\dvol+\int\! \mathcal{\phi}(\mathsf{Q}_e)\,\de^3\bq
+\frac1{2\mu_0}\int \left|\nabla\times\mathbf{A}\right|^2\de^3\bq
\,,
\end{equation}
where $\phi(\mathsf{Q}_e)$ appropriately denotes the internal energy expressed in terms of the charge $\mathsf{Q}_e$.
Notice that the charge neutrality $\mathsf{Q}_e+q_i\,n=0$ is preserved by the dynamics, as it is shown by the equation
\[
\frac{\partial \mathsf{Q}_e}{\partial t}=a_i\operatorname{div}\!\int \!\bp\,f\,\de^3\bp=-q_i\,\frac{\partial}{\partial t}\!\int \!\bp\,f\,\de^3\bp
\,,
\]
which arises from the Hamiltonian structure \eqref{kinetic-HMHD-PB2}.

It is also interesting to notice that the class of Casimir functionals $C(\hat{f})=\int\!\Lambda(\hat{f})\,\de^3\bq\,\de^3\boldsymbol{p}$ for the Vlasov equation again produces the Casimir functionals $C(f)=\int\Phi(f)\,\dvol$ for the Hamiltonian structure \eqref{kinetic-HMHD-PB1} under consideration. Here, the magnetic helicity
\[
\mathcal{H}={\int\mathbf{A}\cdot\nabla\times\mathbf{A}\,\de^3\bq}
\]
is a Casimir of both Poisson brackets \eqref{kinetic-HMHD-PB1} and \eqref{kinetic-HMHD-PB2}. These Casimir functionals provide the opportunity for a nonlinear stability analysis, which will the the subject of future work.

\begin{remark}[Lie algebraic structure] The Hamiltonian structure \eqref{kinetic-HMHD-PB2} of the hybrid Hall-MHD model is inherited from the Lie-Poisson bracket \eqref{kinetic-HMHD-PB1} on the direct-sum Lie algebra
\[
\left(\mathfrak{X}(\Bbb{R}^3)\,\circledS \,C^\infty(\Bbb{R}^3)\right) \oplus C^\infty(\Bbb{R}^6)
\,,
\]
where $C^\infty(\Bbb{R}^6)$ is the Poisson algebra of Hamiltonian
functions, while the variables
$\left(\mathsf{Q}_e\mathbf{A},\mathsf{Q}_e\right)$ belong to the
dual space of the semidirect-product Lie algebra
$\mathfrak{X}(\Bbb{R}^3)\,\circledS \,C^\infty(\Bbb{R}^3)$ \cite{Holm3}. The Hamiltonian corresponding to the Poisson bracket
\eqref{kinetic-HMHD-PB1} is given by
\begin{equation}
H(\hat{f},\mathbf{A},\mathsf{Q}_e)=\frac1{2\m_i}\int\! \hat{f}\, \left|\boldsymbol{p}-q_i\mathbf{A}\right|^2\de^3\bq\,\de^3\boldsymbol{p}+\int\! \mathcal{\phi}(\mathsf{Q}_e)\,\de^3\bq
+\frac1{2\mu_0}\int \left|\nabla\times\mathbf{A}\right|^2\de^3\bq
\,.
\end{equation}
\end{remark}

The absence of a fluid velocity equation in the hybrid formulation
of Hall MHD makes this model particularly difficult to implement
numerically. However, one can introduce a velocity equation by
simply coupling the ion kinetic equation to the equation for its
first two moments
$(\rho_i\boldsymbol{V}_i,\rho_i)=(\int\!\bp\,f\,\de^3\bp,\,\mathsf{m}_i
\int\!f\,\de^3\bp)$. Evidently, this operation does not produce any
changes to the physics, as long as the form of the Hamiltonian is
not modified. On the other hand, the ion momentum equation might be
useful to isolate kinetic pressure terms explicitly
\cite{PaBeFuTaStSu}. When these kinetic pressure terms are omitted,
the ion momentum equation consistently returns the ordinary
equations of Hall MHD.

\section{Conclusions and future work plans}
This paper has presented the Hamiltonian structure of several hybrid
kinetic-fluid models. Upon making assumptions on the form of the
Hamiltonian, new pressure-coupling schemes were derived for either
ordinary fluids and MHD, whereas the existing current-coupling MHD
scheme \cite{ParkEtAl,PaBeFuTaStSu} was shown to possess an
intrinsic Hamiltonian structure. All Poisson brackets were derived
from first principles, so that the Jacobi identity is always
guaranteed. In turn, the Hamiltonian structure produces a Casimir
functional and a circulation law for each of these models. The same
approach was used to present the Hamiltonian structure of the
existing hybrid Hall-MHD scheme \cite{WiYiOmKaQu,YiWiEtAl}.

The next steps will focus on the nonlinear stability analysis, which
is now made possible by the newly discovered Hamiltonian structures.
Another relevant open question concerns the extension of these
Hamiltonian structures  to the case of gyrokinetic equations, which
are frequently used in simulations. The Hamiltonian structure of
gyrokinetic equations is well known \cite{BrHa,Littlejohn,GrKaLi}
and it is possible that the same methods used in this paper can be
fruitful also in that context.

While the Hamiltonian approach was used in this paper, its
Lagrangian counterpart may also deserve further investigation. This
direction can be easily approached by the methods developed in
\cite{CeMaHo} and is part of current work \cite{HoTr2010}.

As a conclusive remark, it is worth mentioning that while pressure coupling schemes were shown to require transport-like terms in the accompanying kinetic equation, one cannot exclude a priori that
removing these terms still allows for other exotic Poisson brackets. Indeed,  although this is unlikely, it cannot be excluded. However, even if this is the case, the
alternative Poisson brackets would not be of Lie-Poisson form and
thus they would not possess well defined symmetry properties, which are the main focus of this paper.

\subsubsection*{Acknowledgments}
It is a pleasure to thank Darryl Holm and Giovanni Lapenta for
helpful discussions and correspondence. Their hospitality at
Imperial College London and Katholieke Universiteit Leuven is
greatly acknowledged. I am also grateful to Phil Morrison and the
anonymous referees for their valuable advice. In addition, I would
like to thank Emanuele Tassi for many stimulating discussions and for
his hospitality at the Center of Theoretical Physics in Marseille.

\newpage

\appendix

\section{Appendix}

\subsection{Proof of theorem \ref{theorem1}\label{appendix1}} First,
one computes the variational derivatives
\[
\frac{\delta \bar{F}}{\delta \bm}=\frac{\delta{F}}{\delta \bN}
\,,\qquad \frac{\delta{F}}{\delta
\mathbf{A}}=\frac{\delta\bar{F}}{\delta
\mathbf{A}}+q_h\,\frac{\delta \bar{F}}{\delta
\bm}\int\!\hat{f}(\bq,\boldsymbol{p})\,\de^3\boldsymbol{p} \,,\qquad
\frac{\delta{F}}{\delta{f}}=\frac{\delta\bar{F}}{\delta{f}}+q_h\,\mathbf{A}\cdot\frac{\delta
\bar{F}}{\delta \bm}
\]
Then, the Poisson bracket \eqref{PB-MHD+Vlasov} becomes
\begin{align}\nonumber
\{F,G\}=& \int{\bN}\cdot \left[\frac{\delta
F}{\delta{\bm}},\frac{\delta G}{\delta {\bm}}\right]\de^3\bq - \int
\rho \left(\frac{\delta F}{\delta \bm}\cdot\nabla\frac{\delta
G}{\delta \rho}-\frac{\delta G}{\delta \bm}\cdot \nabla\frac{\delta
F}{\delta \rho}\right)\de^3\bq
\\ \nonumber
& - \int \mathbf{A}\cdot \left(\frac{\delta F}{\delta
\bm}\operatorname{div}\frac{\delta G}{\delta
\mathbf{A}}-\frac{\delta G}{\delta
\bm}\operatorname{div}\frac{\delta F}{\delta
\mathbf{A}}-\nabla\times\left(\frac{\delta F}{\delta
\bm}\times\frac{\delta G}{\delta \mathbf{A}}-\frac{\delta G}{\delta
\bm}\times\frac{\delta F}{\delta \mathbf{A}}\right)\right)\de^3\bq
\\ \nonumber
&-q_h \int \mathbf{A}\cdot \left(\frac{\delta F}{\delta
\bm}\operatorname{div}\!\left(n\frac{\delta G}{\delta
\mathbf{m}}\right)-\frac{\delta G}{\delta
\bm}\operatorname{div}\!\left(n\frac{\delta F}{\delta
\mathbf{m}}\right)-2\nabla\times\left(n\,\frac{\delta F}{\delta
\bm}\times\frac{\delta G}{\delta \bm}\right)\right)\de^3\bq
\\\nonumber
& + \int \hat{f}\left\{\frac{\delta F}{\delta \hat{f}},\frac{\delta
G}{\delta \hat{f}}\right\}\dvol +q_h \int
\hat{f}\left(\left\{\frac{\delta F}{\delta
\hat{f}},\mathbf{A}\cdot\frac{\delta G}{\delta
\bm}\right\}-\left\{\frac{\delta G}{\delta \hat{f}},
\mathbf{A}\cdot\frac{\delta F}{\delta \bm}\right\}\right)\dvol
\end{align}
where we have dropped the bar notation for convenience and we have denoted $n=\int\!\hat{f}\,\de^3\boldsymbol{p}$. Upon expanding the third line above, one has
\begin{align}\nonumber
\{F,G\}=& \int{\bN}\cdot \left[\frac{\delta
F}{\delta{\bm}},\frac{\delta G}{\delta {\bm}}\right]\de^3\bq - \int
\rho \left(\frac{\delta F}{\delta \bm}\cdot\nabla\frac{\delta
G}{\delta \rho}-\frac{\delta G}{\delta \bm}\cdot \nabla\frac{\delta
F}{\delta \rho}\right)\de^3\bq
\\ \nonumber
& - \int \mathbf{A}\cdot \left(\frac{\delta F}{\delta
\bm}\operatorname{div}\frac{\delta G}{\delta
\mathbf{A}}-\frac{\delta G}{\delta
\bm}\operatorname{div}\frac{\delta F}{\delta
\mathbf{A}}-\nabla\times\left(\frac{\delta F}{\delta
\bm}\times\frac{\delta G}{\delta \mathbf{A}}-\frac{\delta G}{\delta
\bm}\times\frac{\delta F}{\delta \mathbf{A}}\right)\right)\de^3\bq
\\ \nonumber
&-q_h \int \mathbf{A}\cdot \left(\frac{\delta F}{\delta
\bm}\left(\frac{\delta G}{\delta \bm}\cdot\nabla
n\right)-\frac{\delta G}{\delta \bm}\left(\frac{\delta F}{\delta
\bm}\cdot\nabla n\right)\right)
\\\nonumber
&-q_h \int n\,\mathbf{A}\cdot\left(\frac{\delta F}{\delta
\bm}\operatorname{div}\!\left(\frac{\delta G}{\delta
\mathbf{m}}\right)-\frac{\delta G}{\delta
\bm}\operatorname{div}\!\left(\frac{\delta F}{\delta
\mathbf{m}}\right)\right)\de^3\bq
\\ \nonumber
&+2q_h \int \mathbf{A}\cdot \nabla\times\left(n\,\frac{\delta
F}{\delta \bm}\times\frac{\delta G}{\delta \bm}\right)\de^3\bq +
\int \hat{f}\left\{\frac{\delta F}{\delta \hat{f}},\frac{\delta
G}{\delta \hat{f}}\right\}\dvol
\\ \nonumber
&+q_h \int \hat{f}\left(\left\{\frac{\delta F}{\delta
\hat{f}},\mathbf{A}\cdot\frac{\delta G}{\delta
\bm}\right\}-\left\{\frac{\delta G}{\delta \hat{f}},
\mathbf{A}\cdot\frac{\delta F}{\delta \bm}\right\}\right)\dvol
\end{align}
Then, the standard vector identities
\begin{align*}
&\mathbf{a}\times\mathbf{b}\times\mathbf{c}=\left(\mathbf{a}\cdot\mathbf{c}\right)\mathbf{b}-\left(\mathbf{a}\cdot\mathbf{b}\right)\mathbf{c}
\\
&\nabla\times\left(\mathbf{a}\times\mathbf{b}\right)=\mathbf{a}\left(\nabla\cdot\mathbf{b}\right)-\mathbf{b}\left(\nabla\cdot\mathbf{a}\right)+\left[\mathbf{a},\mathbf{b}\right]
\end{align*}
yield
\begin{align}\nonumber
\{F,G\} =& \int{\bN}\cdot \left[\frac{\delta
F}{\delta{\bm}},\frac{\delta G}{\delta {\bm}}\right]\de^3\bq - \int
\rho \left(\frac{\delta F}{\delta \bm}\cdot\nabla\frac{\delta
G}{\delta \rho}-\frac{\delta G}{\delta \bm}\cdot \nabla\frac{\delta
F}{\delta \rho}\right)\de^3\bq
\\ \nonumber
& - \int \mathbf{A}\cdot \left(\frac{\delta F}{\delta
\bm}\operatorname{div}\frac{\delta G}{\delta
\mathbf{A}}-\frac{\delta G}{\delta
\bm}\operatorname{div}\frac{\delta F}{\delta
\mathbf{A}}-\nabla\times\left(\frac{\delta F}{\delta
\bm}\times\frac{\delta G}{\delta \mathbf{A}}-\frac{\delta G}{\delta
\bm}\times\frac{\delta F}{\delta \mathbf{A}}\right)\right)\de^3\bq
\\ \nonumber
& -q_h \int \mathbf{A}\cdot \left( \nabla n\times\frac{\delta
F}{\delta \bm}\times\frac{\delta G}{\delta \bm} + n\, \nabla\times
\left(\frac{\delta F}{\delta \bm}\times\frac{\delta G}{\delta
\mathbf{m}}\right) -n \left[\frac{\delta F}{\delta \bm},\frac{\delta
G}{\delta \mathbf{m}}\right]\right)\de^3\bq
\\ \nonumber
&+2q_h \int \mathbf{A}\cdot \nabla\times\left(n\,\frac{\delta
F}{\delta \bm}\times\frac{\delta G}{\delta \bm}\right)\de^3\bq +
\int \hat{f}\left\{\frac{\delta F}{\delta \hat{f}},\frac{\delta
G}{\delta \hat{f}}\right\}\dvol
\\ \nonumber
&+q_h \int \hat{f}\left(\left\{\frac{\delta F}{\delta
\hat{f}},\mathbf{A}\cdot\frac{\delta G}{\delta
\bm}\right\}-\left\{\frac{\delta G}{\delta \hat{f}},
\mathbf{A}\cdot\frac{\delta F}{\delta \bm}\right\}\right)\dvol
\end{align}
Upon collecting all vector field commutators and simplifying
according to the general formula
\[
\nabla\times\left(k\,\mathbf{a}\right)=k\,\nabla\times\mathbf{a}+\nabla
k\times\mathbf{a}\,,
\]
we obtain
\begin{align}
\{F,G\} =& \int{\bm}\cdot \left[\frac{\delta
F}{\delta{\bm}},\frac{\delta G}{\delta {\bm}}\right]\de^3\bq - \int
\rho \left(\frac{\delta F}{\delta \bm}\cdot\nabla\frac{\delta
G}{\delta \rho}-\frac{\delta G}{\delta \bm}\cdot \nabla\frac{\delta
F}{\delta \rho}\right)\de^3\bq
\\ \nonumber
& - \int \mathbf{A}\cdot \left(\frac{\delta F}{\delta
\bm}\operatorname{div}\frac{\delta G}{\delta
\mathbf{A}}-\frac{\delta G}{\delta
\bm}\operatorname{div}\frac{\delta F}{\delta
\mathbf{A}}-\nabla\times\left(\frac{\delta F}{\delta
\bm}\times\frac{\delta G}{\delta \mathbf{A}}-\frac{\delta G}{\delta
\bm}\times\frac{\delta F}{\delta \mathbf{A}}\right)\right)\de^3\bq
\\ \nonumber
&+q_h \int n\,\nabla\times\mathbf{A}\cdot \frac{\delta F}{\delta
\bm}\times\frac{\delta G}{\delta \bm}\de^3\bq + \int
\hat{f}\left\{\frac{\delta F}{\delta \hat{f}},\frac{\delta G}{\delta
\hat{f}}\right\}\dvol
\\ \nonumber
&+q_h \int \hat{f}\left(\left\{\frac{\delta F}{\delta
\hat{f}},\mathbf{A}\cdot\frac{\delta G}{\delta
\bm}\right\}-\left\{\frac{\delta G}{\delta \hat{f}},
\mathbf{A}\cdot\frac{\delta F}{\delta \bm}\right\}\right)\dvol
\end{align}
At this point, one expresses the above Poisson bracket in terms of
the distribution
$f(\bq,\bp)=\hat{f}(\bq,\boldsymbol{p}-q_h\mathbf{A})$. Then, the
relations \cite{MaWe1,MaWeRaScSp}
\[
\big\{\hat{h},\hat{k}\big\}=\left\{{h},{k}\right\}+q_h\nabla\times\mathbf{A}\cdot\frac{\partial h}{\partial \bp}\times
\frac{\partial k}{\partial \bp}
\,,\qquad\quad
\frac{\delta F}{\delta \mathbf{A}}=\frac{\delta \bar{F}}{\delta \mathbf{A}}-q_h\, f\frac{\partial}{\partial\bp}\frac{\delta \bar{F}}{\delta f}
\]
and
\begin{align}
\left\{\frac{\delta F}{\delta f},\mathbf{A}\cdot\frac{\delta
G}{\delta \bm}\right\} = -\frac{\partial}{\partial \bp}\frac{\delta
F}{\delta f}\cdot\nabla\left(\frac{\delta G}{\delta
\bm}\cdot\mathbf{A}\right)
\end{align}
transform the bracket to
\begin{align}
\{F,G\} =& \int{\bm}\cdot \left[\frac{\delta
F}{\delta{\bm}},\frac{\delta G}{\delta {\bm}}\right]\de^3\bq - \int
\rho \left(\frac{\delta F}{\delta \bm}\cdot\nabla\frac{\delta
G}{\delta \rho}-\frac{\delta G}{\delta \bm}\cdot \nabla\frac{\delta
F}{\delta \rho}\right)\de^3\bq
\\ \nonumber
& - \int \mathbf{A}\cdot \left(\frac{\delta F}{\delta
\bm}\operatorname{div}\frac{\delta G}{\delta
\mathbf{A}}-\frac{\delta G}{\delta
\bm}\operatorname{div}\frac{\delta F}{\delta
\mathbf{A}}-\nabla\times\left(\frac{\delta F}{\delta
\bm}\times\frac{\delta G}{\delta \mathbf{A}}-\frac{\delta G}{\delta
\bm}\times\frac{\delta F}{\delta \mathbf{A}}\right)\right)\de^3\bq
\\ \nonumber
& -q_h\int f\,\nabla\times\mathbf{A}\cdot\left(\frac{\delta
F}{\delta
\bm}\times\frac{\partial}{\partial\bp}\frac{\delta{G}}{\delta
f}-\frac{\delta G}{\delta
\bm}\times\frac{\partial}{\partial\bp}\frac{\delta{F}}{\delta
f}-\frac{\partial}{\partial\bp}\frac{\delta{F}}{\delta
f}\times\frac{\partial}{\partial\bp}\frac{\delta{G}}{\delta
f}\right)\de^3\bq
\\ \nonumber
&+q_h \int n\,\nabla\times\mathbf{A}\cdot \frac{\delta F}{\delta
\bm}\times\frac{\delta G}{\delta \bm}\de^3\bq +
\int{f}\left\{\frac{\delta F}{\delta{f}},\frac{\delta
G}{\delta{f}}\right\}\dvol
\end{align}
Introducing the magnetic induction $\bB=\nabla\times\mathbf{A}$ by the formula
\[
\frac{\delta
F}{\delta \mathbf{A}}=\nabla\times\frac{\delta
F}{\delta \bB}
\]
completes the proof, thereby yielding the bracket \eqref{PB-current-hybridMHD}.
\,$\blacksquare$

\bigskip

\subsection{Proof of theorem \ref{theorem2}\label{appendix2}}

The momentum shift
$\bM=\bN+\int\!\boldsymbol{p}\,f\,\de^3\boldsymbol{p}$ takes the
Poisson bracket \eqref{PB-MHD+Vlasov} to the form
\begin{align}\nonumber
\{F,G\}=& \int{\bM}\cdot \left[\frac{\delta
F}{\delta{\bM}},\frac{\delta G}{\delta
{\bM}}\right]\de^3\bq
-
\int \rho \left(\frac{\delta F}{\delta \bM}\cdot\nabla\frac{\delta G}{\delta \rho}-\frac{\delta G}{\delta \bM}\cdot \nabla\frac{\delta F}{\delta \rho}\right)\de^3\bq
\\\nonumber
&
-
\int \mathbf{A}\cdot
\left(\pounds_{{\delta F}/{\delta \bM}}\frac{\delta G}{\delta \mathbf{A}}-\pounds_{{\delta G}/{\delta \bM}}\frac{\delta F}{\delta \mathbf{A}}\right)\de^3\bq
\\
&
+
\int \hat{f}\left(\left\{\frac{\delta
F}{\delta \hat{f}},\frac{\delta G}{\delta \hat{f}}\right\}+\left\{\frac{\delta
F}{\delta \hat{f}},\boldsymbol{p}\cdot\frac{\delta G}{\delta \bM}\right\}-\left\{\frac{\delta
G}{\delta \hat{f}},\boldsymbol{p}\cdot\frac{\delta F}{\delta \bM}\right\}\right)\dvol
\end{align}
where $\pounds_{\bu}$ denotes Lie derivative with respect to the velocity ${\bu}$. Then, substitution of the formulas
\[
\big\{\hat{h},\hat{k}\big\}=\left\{{h},{k}\right\}+q_h\nabla\times\mathbf{A}\cdot\left(\frac{\partial h}{\partial \bp}
\times
\frac{\partial k}{\partial \bp}
\right)
\,,\qquad\quad
\frac{\delta F}{\delta \mathbf{A}}=\frac{\delta \bar{F}}{\delta \mathbf{A}}-q_h\int\!f\,\frac{\partial }{\partial\bp}\frac{\delta \bar{F}}{\delta f}\,\de^3\bp
\]
and some integration by parts yield the bracket
\begin{align}\nonumber
\{F,G\}=& \int{\bM}\cdot \left[\frac{\delta
F}{\delta{\bM}},\frac{\delta G}{\delta
{\bM}}\right]\de^3\bq
-
\int \rho \left(\frac{\delta F}{\delta \bM}\cdot\nabla\frac{\delta G}{\delta \rho}-\frac{\delta G}{\delta \bM}\cdot \nabla\frac{\delta F}{\delta \rho}\right)\de^3\bq
\\\nonumber
&
-
\int \mathbf{A}\cdot
\left(\pounds_{{\delta F}/{\delta \bM}\,}\frac{\delta G}{\delta \mathbf{A}}-\pounds_{{\delta G}/{\delta \bM}\,}\frac{\delta F}{\delta \mathbf{A}}\right)\de^3\bq
\\\nonumber
&
+
q_h\int{f}\left(\frac{\partial}{\partial \bp}\frac{\delta
F}{\delta f}\cdot\pounds_{{\delta G}/{\delta \bM}\,}\mathbf{A}-\frac{\partial}{\partial \bp}\frac{\delta
G}{\delta f}\cdot\pounds_{{\delta F}/{\delta \bM}\,}\mathbf{A}\right)\dvol
\\
&
+\int{f}\left(\left\{\frac{\delta
F}{\delta{f}},\frac{\delta G}{\delta{f}}\right\}+q_h\nabla\times\mathbf{A}\cdot\left(\frac{\partial}{\partial \bp}\frac{\delta
F}{\delta f}
\times
\frac{\partial}{\partial \bp}\frac{\delta
G}{\delta f}
\right)
\right)\dvol
\\
&
+
\int{f}\left(\left\{\frac{\delta
F}{\delta \hat{f}},\left(\bp+q_h\mathbf{A}\right)\cdot\frac{\delta G}{\delta \bM}\right\}-\left\{\frac{\delta
G}{\delta \hat{f}},\left(\bp+q_h\mathbf{A}\right)\cdot\frac{\delta F}{\delta \bM}\right\}\right)\dvol
\\
&+
q_h\nabla\times\mathbf{A}\cdot\int{f}\left(\frac{\partial}{\partial \bp}\frac{\delta
F}{\delta f}
\times
\frac{\delta
G}{\delta \bM}
-
\frac{\partial}{\partial \bp}\frac{\delta
G}{\delta f}
\times
\frac{\delta
F}{\delta \bM}
\right)
\dvol
\,.
\end{align}
Noticing that
\begin{align}
\frac{\partial}{\partial \bp}\frac{\delta
F}{\delta f}\cdot\pounds_{{\delta G}/{\delta \bM}\,}\mathbf{A}
=&
\frac{\partial}{\partial \bp}\frac{\delta
F}{\delta f}\cdot\left(\nabla\left(\frac{\delta G}{\delta \bM}\cdot\mathbf{A}\right)-\frac{\delta G}{\delta \bM}\times\nabla\times\mathbf{A}\right)
\\
=&
-\left\{\frac{\delta
F}{\delta f},\mathbf{A}\cdot\frac{\delta G}{\delta \bM}\right\}
-
\nabla\times\mathbf{A}\cdot\frac{\partial}{\partial \bp}\frac{\delta
F}{\delta f}\times\frac{\delta G}{\delta \bM}
\end{align}
takes to the Poisson bracket
\begin{align}\nonumber
\{F,G\}=& \int{\bM}\cdot \left[\frac{\delta
F}{\delta{\bM}},\frac{\delta G}{\delta
{\bM}}\right]\de^3\bq
-
\int \rho \left(\frac{\delta F}{\delta \bM}\cdot\nabla\frac{\delta G}{\delta \rho}-\frac{\delta G}{\delta \bM}\cdot \nabla\frac{\delta F}{\delta \rho}\right)\de^3\bq
\\\nonumber
&
-
\int \mathbf{A}\cdot
\left(\pounds_{{\delta F}/{\delta \bM}\,}\frac{\delta G}{\delta \mathbf{A}}-\pounds_{{\delta G}/{\delta \bM}\,}\frac{\delta F}{\delta \mathbf{A}}\right)\de^3\bq
\\\nonumber
&
+
\int f\left\{\frac{\delta
F}{\delta f},\frac{\delta G}{\delta f}\right\}\dvol
+q_h\nabla\times\mathbf{A}\cdot
\int \!f\left(\frac{\partial}{\partial \bp}\frac{\delta
F}{\delta f}
\times
\frac{\partial}{\partial \bp}\frac{\delta
G}{\delta f}
\right)\dvol
\\
&+\int f\left(\left\{\frac{\delta
F}{\delta f},\bp\cdot\frac{\delta G}{\delta \bM}\right\}-\left\{\frac{\delta
G}{\delta f},\bp\cdot\frac{\delta F}{\delta \bM}\right\}\right)\dvol
\,.
\end{align}
Then, using the formula
\[
\frac{\delta
F}{\delta \mathbf{A}}=\nabla\times\frac{\delta
F}{\delta \bB}
\]
yields the Poisson structure \eqref{PB-pressure-hybridMHD}. $\blacksquare$

\newpage

\end{document}